\documentclass[aps,prl,groupedaddress,superscriptaddress,onecolumn,notitlepage,bibnotes,nofootinbib]{revtex4-1}

\usepackage{newpxtext,newpxmath}

\let\coloneqq\relax

\usepackage[latin1]{inputenc}
\usepackage{amsthm}
\usepackage{amssymb}
\usepackage{amsmath}
\usepackage{bbold}
\usepackage{bbm}
\usepackage[pdftex]{hyperref}
\usepackage{braket}
\usepackage{dsfont}
\usepackage{mathdots}
\usepackage{mathtools}
\usepackage{enumerate}
\usepackage[shortlabels]{enumitem}
\usepackage{csquotes}
\usepackage{stmaryrd}
\usepackage[cal=boondox]{mathalfa}
\usepackage{graphicx}
\usepackage{stackengine}
\usepackage{scalerel}
\usepackage{tensor}       
\usepackage{array}
\usepackage{makecell}
\newcolumntype{x}[1]{>{\centering\arraybackslash}p{#1}}
\usepackage{tikz}
\usepackage{pgfplots}
\usetikzlibrary{shapes.geometric, shapes.misc, positioning, arrows, arrows.meta, decorations.pathreplacing, decorations.pathmorphing, patterns, angles, quotes, calc}
\usepackage{booktabs}
\usepackage{xfrac}
\usepackage{siunitx}
\usepackage{centernot}
\usepackage{comment}
\usepackage{chngcntr}
\usepackage{caption}
\usepackage{subcaption}

\newtheorem{thm}{Theorem}
\newtheorem*{thm*}{Theorem}
\newtheorem{prop}[thm]{Proposition}
\newtheorem*{prop*}{Proposition}
\newtheorem{lemma}[thm]{Lemma}
\newtheorem*{lemma*}{Lemma}

\newtheorem*{cor*}{Corollary}

\newtheorem*{cj*}{Conjecture}

\newtheorem*{Def*}{Definition}

\newtheorem*{question*}{Question}

\newtheorem*{problem*}{Problem}

\makeatletter
\def\thmhead@plain#1#2#3{%
  \thmname{#1}\thmnumber{\@ifnotempty{#1}{ }\@upn{#2}}%
  \thmnote{ {\the\thm@notefont#3}}}
\let\thmhead\thmhead@plain
\makeatother

\theoremstyle{definition}

\newcommand{\bb}{\begin{equation}\begin{aligned}\hspace{0pt}}
\newcommand{\bbb}{\begin{equation*}\begin{aligned}}
\newcommand{\ee}{\end{aligned}\end{equation}}
\newcommand{\eee}{\end{aligned}\end{equation*}}
\newcommand*{\coloneqq}{\mathrel{\vcenter{\baselineskip0.5ex \lineskiplimit0pt \hbox{\scriptsize.}\hbox{\scriptsize.}}} =}

\newcommand{\eqt}[1]{\stackrel{\mathclap{\scriptsize \mbox{#1}}}{=}}
\newcommand{\leqt}[1]{\stackrel{\mathclap{\scriptsize \mbox{#1}}}{\leq}}

\newcommand{\ketbra}[1]{\ket{#1}\!\!\bra{#1}}

\newcommand{\e}{\varepsilon}
\renewcommand{\epsilon}{\varepsilon}
\newcommand{\G}{\mathrm{\scriptscriptstyle G}}

\newcommand{\id}{\mathds{1}}
\newcommand{\R}{\mathds{R}}

\DeclareMathOperator{\Tr}{Tr}

\DeclareMathAlphabet{\pazocal}{OMS}{zplm}{m}{n}

\DeclareMathOperator{\diag}{diag}
\DeclareMathOperator{\ran}{ran}

\newcommand{\MM}{\pazocal{M}}

\newcommand{\lsmatrix}{\left(\begin{smallmatrix}}
\newcommand{\rsmatrix}{\end{smallmatrix}\right)}

\newcommand{\rel}[3]{#1\big(#2\,\big\|\,#3\big)}

\stackMath

\stackMath

\makeatletter
\newcommand*\rel@kern[1]{\kern#1\dimexpr\macc@kerna}
\newcommand*\widebar[1]{%
  \begingroup
  \def\mathaccent##1##2{%
    \rel@kern{0.8}%
    \overline{\rel@kern{-0.8}\macc@nucleus\rel@kern{0.2}}%
    \rel@kern{-0.2}%
  }%
  \macc@depth\@ne
  \let\math@bgroup\@empty \let\math@egroup\macc@set@skewchar
  \mathsurround\z@ \frozen@everymath{\mathgroup\macc@group\relax}%
  \macc@set@skewchar\relax
  \let\mathaccentV\macc@nested@a
  \macc@nested@a\relax111{#1}%
  \endgroup
}

\counterwithin*{equation}{part}
\counterwithin*{thm}{part}
\counterwithin*{figure}{part}

\tikzset{meter/.append style={draw, inner sep=10, rectangle, font=\vphantom{A}, minimum width=30, line width=.8, path picture={\draw[black] ([shift={(.1,.3)}]path picture bounding box.south west) to[bend left=50] ([shift={(-.1,.3)}]path picture bounding box.south east);\draw[black,-latex] ([shift={(0,.1)}]path picture bounding box.south) -- ([shift={(.3,-.1)}]path picture bounding box.north);}}}
\tikzset{roundnode/.append style={circle, draw=black, fill=gray!20, thick, minimum size=10mm}}
\tikzset{squarenode/.style={rectangle, draw=black, fill=none, thick, minimum size=10mm}}

\definecolor{Blues5seq1}{RGB}{239,243,255}
\definecolor{Blues5seq2}{RGB}{189,215,231}
\definecolor{Blues5seq3}{RGB}{107,174,214}
\definecolor{Blues5seq4}{RGB}{49,130,189}
\definecolor{Blues5seq5}{RGB}{8,81,156}

\definecolor{Greens5seq1}{RGB}{237,248,233}
\definecolor{Greens5seq2}{RGB}{186,228,179}
\definecolor{Greens5seq3}{RGB}{116,196,118}
\definecolor{Greens5seq4}{RGB}{49,163,84}
\definecolor{Greens5seq5}{RGB}{0,109,44}

\definecolor{Reds5seq1}{RGB}{254,229,217}
\definecolor{Reds5seq2}{RGB}{252,174,145}
\definecolor{Reds5seq3}{RGB}{251,106,74}
\definecolor{Reds5seq4}{RGB}{222,45,38}
\definecolor{Reds5seq5}{RGB}{165,15,21}

\allowdisplaybreaks

\begin{document}

\title{A fully Gaussian quantum Stein's lemma}

\author{Filippo Girardi $^{\diamond}$}
\email{filippo.girardi@sns.it}
\affiliation{Scuola Normale Superiore, Piazza dei Cavalieri 7, 56126 Pisa, Italy}

\author{Jacopo Rizzo $^{\diamond}$}
\email{jacopo.rizzo@fu-berlin.de}
\affiliation{Dahlem Center for Complex Quantum Systems, Freie Universit\"at Berlin, 14195 Berlin, Germany}

\author{Leah Turner $^{\diamond}$}
\email{pmylt1@nottingham.ac.uk}
\affiliation{School of Mathematical Sciences and Centre for the Mathematics and Theoretical Physics of Quantum Non-Equilibrium Systems, University of Nottingham, University Park, Nottingham NG7 2RD, United Kingdom}

\author{Gerardo Adesso}
\email{gerardo.adesso@nottingham.ac.uk}
\affiliation{School of Mathematical Sciences and Centre for the Mathematics and Theoretical Physics of Quantum Non-Equilibrium Systems, University of Nottingham, University Park, Nottingham NG7 2RD, United Kingdom}

\author{Ludovico Lami}
\email{ludovico.lami@sns.it}
\affiliation{Scuola Normale Superiore, Piazza dei Cavalieri 7, 56126 Pisa, Italy}

\begin{abstract}
A central yet mysterious problem in quantum information is the asymptotic distinguishability of quantum states under restricted measurements. Here we look at asymptotic asymmetric hypothesis testing for general multi-mode bosonic Gaussian states restricted to arbitrary collective Gaussian measurements. We establish a single-letter formula for the Stein exponent for many pairs of states, featuring two distinct contributions, corresponding to first and second moments, respectively. Remarkably, the Gaussian-measured relative entropy exhibits genuine non-additivity, but the only effect of additivity violations is to asymptotically decouple first and second moments. We prove this by establishing sufficient conditions for additivity of the second-moment term via novel log-determinant matrix inequalities. Our results not only identify a single-letter expression for the Gaussian Stein exponent in a wide range of practical situations, but also rule out any possible advantage from collective Gaussian measurements when considering centred Gaussian states.
\end{abstract}

\maketitle

\begingroup
\renewcommand{\thefootnote}{$\diamond$}
\footnotetext{These authors contributed equally to this work.}
\endgroup

\section{Introduction}
Gaussian states, transformations, and measurements lie at the heart of many implementations of photonic technologies, both due to their elegant description in phase space and ease of implementation experimentally \cite{weedbrook12,adesso14,BUCCO}. On the other hand, restricting to only Gaussian operations can impose limitations on state discrimination tasks, even when only Gaussian states are considered \cite{Takeoka-Sasaki,hnhp-jhr2,PhysRevA.103.022423}. This makes the fully Gaussian setting a natural framework in which to understand how constraints on the available measurements affect fundamental information processing tasks. In this paper, we study quantum hypothesis testing when both hypotheses are multimode Gaussian states and the allowed measurements are restricted to be Gaussian. 
Such an investigation contributes to and complements the broader study of fully Gaussian quantum-information protocols \cite{G-resource-theories,PhysRevA.71.032336,turner2026optimaldiscriminationgaussianstates,Oh_2019,walsh2025allgaussianstatediscriminationcoherent} and of quantum state discrimination under restricted classes of measurements \cite{Piani2009,VV-dh,brandao_adversarial,11007117,Matthews2010,Y_ng_ez_2026,Leone_2025,VV-dh-Chernoff}.


In the standard i.i.d.\ quantum hypothesis testing setting, one is given $n$ identical copies of a quantum system, which are promised to be prepared according to one of two possible hypotheses: either the state is $\rho^{\otimes n}$, corresponding to the null hypothesis, or it is $\sigma^{\otimes n}$, corresponding to the alternative hypothesis. The task is to distinguish between these two possibilities by performing a measurement on the joint $n$-copy system and, from the observed outcome, decide which hypothesis is more likely to be correct. Any such decision procedure is characterised by a pair of error probabilities, namely the probability of rejecting the null hypothesis when it is true and the probability of accepting it when the alternative hypothesis is true. These are called type-I and type-II errors, respectively, and their probabilities generally obey a trade-off. A fundamental setting is given by asymmetric hypothesis testing: here one minimises the type-II error subject to a bound on the type-I error. 
In the asymptotic limit, the \emph{Stein exponent} fully characterises the optimal exponential decay rate of the type-II error under a fixed type-I error bound $\varepsilon\in(0,1)$. Here we study Gaussian-restricted hypothesis testing in the regime where the type-I error bound is taken to zero after the asymptotic limit. With unrestricted collective measurements, the Stein exponent coincides with the quantum relative entropy \cite{Hiai1991,Ogawa2000}. 
We study the fully Gaussian version of this problem. 
Specifically, we consider two Gaussian hypotheses $\rho_\G$ and $\sigma_\G$, allowing arbitrary collective Gaussian measurements \cite{weedbrook12,adesso14,BUCCO}. Whereas the unrestricted Stein exponent admits a single-letter characterisation in terms of the quantum relative entropy, its Gaussian counterpart is given by the \emph{regularised Gaussian-measured relative entropy}. This brings the associated \emph{additivity problem} to the forefront: can collective Gaussian measurements outperform independent single-copy measurements, and what mechanism enables such an advantage? More generally, under which conditions can the regularisation be evaluated to obtain a single-letter characterisation of the Gaussian Stein exponent?

Here, we make substantial progress towards this goal through two complementary results. First, we establish an exact asymptotic separation of the contributions of first and second moments. By applying a mixing transformation across the copies, the entire displacement difference can be concentrated into a single output copy without altering the covariance matrices. The first-moment contribution can thus be optimised on that copy, while the remaining copies are devoted to covariance discrimination. Since the loss of one copy has no asymptotic effect, this mechanism yields an explicit displacement contribution for arbitrary Gaussian hypotheses and confines the remaining additivity problem to their covariances only.
Second, we establish a general structural criterion for exact additivity of the centred Gaussian-measured relative entropy. More precisely, we require the reference covariance matrix $W$ to have a flat symplectic spectrum, so that $SWS^\intercal=w\id_{2m}$ for some symplectic $S$, and the transformed covariance $\widetilde V=SVS^\intercal$ to satisfy $[\widetilde V,\Omega\widetilde V\Omega^\intercal]=0$. We show that these input conditions alone rule out any advantage from collective Gaussian measurements on the centred states, for every number of copies $n$. The criterion encompasses all single-mode pairs and all pure multimode pairs, as well as families of mixed multimode states. Combined with the separation of moments, it yields single-letter characterisations of the Gaussian Stein exponent throughout this class.

\section{Summary of main results}
In a Gaussian hypothesis-testing problem, we consider $n$ copies of an $m$-mode continuous-variable quantum system, together with two hypotheses for the state in which it has been prepared: $\rho_\G^{\otimes n}$, corresponding to the null hypothesis, and $\sigma_\G^{\otimes n}$, corresponding to the alternative hypothesis. Here, $\rho_\G$ and $\sigma_\G$ are two given Gaussian states, described by quantum covariance matrices (QCMs) $V,W$ and first moments $s,t$, respectively. Our goal is to decide which of the two hypotheses is correct. To this end, we can choose an arbitrary Gaussian measurement on the full $nm$-mode system, perform it, and use classical post-processing of the measurement outcome to formulate our decision. We refer to any such procedure as a \emph{Gaussian strategy}.
For any given Gaussian strategy, there is an associated pair of error probabilities $(\alpha_n,\beta_n)$. Here, $\alpha_n$, called the \emph{type-I error probability}, is the probability of incorrectly accepting the alternative hypothesis $\sigma_\G^{\otimes n}$ when the null hypothesis $\rho_\G^{\otimes n}$ is actually correct, and vice versa for the \emph{type-II error probability} $\beta_n$. We denote by $\beta_{n,\e}\big(\rho_\G^{\otimes n}\,\big\|\,\sigma_\G^{\otimes n}\big)$ the smallest type-II error probability achievable by a Gaussian strategy while keeping $\alpha_n\leq\e$. Clearly, there is a trade-off between $\alpha_n$ and $\beta_n$: requiring the former to be smaller generally forces the latter to be larger. As usual, we are interested in the asymptotic rate of decay of the type-II error probability for a fixed, small type-I error probability. We then take $\varepsilon\to0^+$. This quantity characterises the optimal performance of hypothesis testing in the fully Gaussian setting, in which both the states and the allowed measurements are Gaussian, and we call it the \emph{Gaussian Stein exponent}. Formally, it is given by
\bb
\mathrm{Stein}_\G(\rho_\G\|\sigma_\G) \coloneqq \lim_{\e\to 0^+} \liminf_{n\to\infty} \left\{ - \frac1n \log \beta_{n,\e}\big(\rho_\G^{\otimes n}\,\big\|\,\sigma_\G^{\otimes n}\big) \right\} .
\ee
If \emph{arbitrary} quantum measurements are allowed, the asymptotic error exponent is given by the quantum relative entropy, according to the quantum Stein's lemma~\cite{Hiai1991,Ogawa2000}, and can be evaluated explicitly for Gaussian states~\cite{LL-Renyi}. Here we take a different route: rather than lifting the Gaussian restriction, we retain it throughout and ask how much asymptotic distinguishability remains accessible to Gaussian measurements alone. This follows the general viewpoint of quantum state discrimination under restricted classes of measurements~\cite{Piani2009,VV-dh,brandao_adversarial}, but with the restriction imposed by Gaussian quantum mechanics.

As a first step, we show that this asymptotic discrimination problem is exactly characterised by the regularisation of the corresponding restricted measured relative entropy. For any two quantum states $\omega,\tau$, we define the \emph{Gaussian-measured relative entropy}
\begin{align}
D^\G(\omega\|\tau)
\coloneqq
\sup_{\MM_\G}
D\big(
\MM_\G(\omega)
\big\|
\MM_\G(\tau)
\big),
\label{Gaussian_Stein_simplified_2}
\end{align}
where $D$ denotes the \emph{classical} relative entropy between the probability distributions obtained after the measurement, and the supremum is taken over all Gaussian measurements $\MM_\G$. We then obtain the following characterisation of the Gaussian Stein exponent (see Proposition~\ref{prop:regstein} in the Appendix):
\begin{align}
\mathrm{Stein}_\G(\rho_\G\|\sigma_\G)
=
\lim_{n\to\infty}
\frac1n
D^\G\big(
\rho_\G^{\otimes n}
\big\|
\sigma_\G^{\otimes n}
\big).
\label{Gaussian_Stein_simplified_11}
\end{align}
The main difficulty is therefore the regularisation in Eq.~\eqref{Gaussian_Stein_simplified_11}. At the $n$-copy level, the optimisation ranges over arbitrary collective Gaussian measurements acting jointly across all copies, and it is not a priori clear whether such measurements can outperform independent single-copy measurements. The evaluation of the Gaussian Stein exponent thus becomes an additivity problem for the Gaussian-measured relative entropy. In other words, the regularisation probes whether the Gaussian restriction changes only the attainable distinguishability rate, or also introduces a genuinely many-copy structure to the optimal measurement strategy. If additivity holds, the asymptotic exponent immediately admits a single-letter characterisation; conversely, any failure of additivity would reveal a genuine advantage of collective Gaussian measurements. Determining whether, and under which conditions, this regularisation can be removed is therefore a central question in the characterisation of asymptotic state discrimination under Gaussian measurement restrictions.

This brings us to the central question of this work:
\begin{center}
\emph{Can the Gaussian Stein exponent be evaluated exactly for arbitrary Gaussian states? Equivalently, can the regularisation in Eq.~\eqref{Gaussian_Stein_simplified_11} be solved, and when can it be removed altogether?}
\end{center}
In the next sections, we make substantial progress on resolving this open problem.

\subsection{Decoupling of first and second moments}
As a first step, we show that the Gaussian Stein exponent admits a sharp structural decomposition: the entire dependence on the first moments $s,t$ of the two Gaussian states can be solved exactly and is intrinsically single-letter. As a consequence, the full difficulty of the regularisation problem is carried by the covariance matrices alone.

\begin{thm}[(Asymptotic separation of moments (Theorem \ref{prop:momentsepapp} in the Appendix))]\label{thm:momentsep}
For any pair of Gaussian states $\rho_\G$ and $\sigma_\G$, with first moments $s$ and $t$, and covariance matrices $V$ and $W$, respectively, the corresponding Gaussian Stein exponent factorises as
\bb\label{eq:sfactor}
\mathrm{Stein}_\G(\rho_\G\|\sigma_\G) = \mathrm{Stein}_\G(\rho^0_\G\|\sigma^0_\G) + (s-t)^\intercal W^{-1}(s-t), 
\ee
where $\rho^0_\G$ and $\sigma^0_\G$ are the Gaussian states obtained by setting the first moments of $\rho_\G$ and $\sigma_\G$ to zero.
\end{thm}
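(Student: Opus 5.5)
We prove the two inequalities separately, working throughout with the regularised formula \eqref{Gaussian_Stein_simplified_11}. The convention is that the QCM normalisation makes the classical relative entropy between Gaussian distributions $\mathcal N(\mu_1,\Sigma_1)$ and $\mathcal N(\mu_2,\Sigma_2)$ carry the mean term $(\mu_1-\mu_2)^\intercal\Sigma_2^{-1}(\mu_1-\mu_2)$, matching the displacement term in \eqref{eq:sfactor}.

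\emph{Achievability ($\geq$).} We use the mixing construction sketched in the introduction. Apply the passive orthogonal transformation $O = F\otimes \id_{2m}$ on the $n$ copies, where $F$ is a real orthogonal $n\times n$ matrix whose first row is $(1,\dots,1)/\sqrt n$. This map is Gaussian and unitary. Because both hypotheses are i.i.d., it leaves the covariance matrices $\bigoplus V$ and $\bigoplus W$ invariant. It moves all the displacement into the first output copy, which now has means $\sqrt n\, s$ and $\sqrt n\, t$, while the other $n-1$ copies become centred.

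The transformed states are therefore $\rho_{\G,\sqrt n s}\otimes (\rho^0_\G)^{\otimes (n-1)}$ versus $\sigma_{\G,\sqrt n t}\otimes(\sigma^0_\G)^{\otimes(n-1)}$. We then measure the two blocks separately. On the last $n-1$ copies we use an optimal Gaussian measurement for $D^\G$ of the centred states. On the first copy we use heterodyne detection, or more generally a general-dyne measurement with small added noise. That produces Gaussian outcomes with covariances $V+\gamma$ and $W+\gamma$ and means $\sqrt n s$, $\sqrt n t$, so the classical relative entropy is $n(s-t)^\intercal (W+\gamma)^{-1}(s-t) + O(1)$.

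Additivity of the classical relative entropy on product measurements then gives
\[
\tfrac1n D^\G \geq \tfrac{1}{n} D^\G\big((\rho^0_\G)^{\otimes(n-1)}\|(\sigma^0_\G)^{\otimes(n-1)}\big) + (s-t)^\intercal(W+\gamma)^{-1}(s-t) + o(1).
\]
To push $\gamma\to0$, we take $\gamma$ to be a limit of squeezed general-dyne seeds, namely $\gamma=\lim \epsilon$-noise $\to 0$ along a rank-deficient sequence. This is the one delicate point on this side: we need $\inf_\gamma (W+\gamma)^{-1}$ to approach $W^{-1}$ in the relevant direction. We can arrange it by squeezing the measurement along $W^{-1/2}(s-t)$. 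If exact approach to $W^{-1}$ is not possible from single-copy measurements, we instead let the seed depend on $n$.

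\emph{Converse ($\leq$).} Take an arbitrary Gaussian measurement on $nm$ modes. Its outcome distributions are Gaussian, with covariances $A^\intercal (\bigoplus V) A+\Gamma$ and $A^\intercal(\bigoplus W)A+\Gamma$ and means $A^\intercal s^{\oplus n}$ and $A^\intercal t^{\oplus n}$. The Gaussian relative entropy splits exactly into a covariance part, which equals the relative entropy obtained by applying the same measurement to the centred states, plus the mean part $d^\intercal A(A^\intercal W^{\oplus n} A+\Gamma)^{-1}A^\intercal d$, where $d=(s-t)^{\oplus n}$. The covariance part is at most $D^\G\big((\rho^0_\G)^{\otimes n}\|(\sigma^0_\G)^{\otimes n}\big)$.

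For the mean part, the matrix inequality $A(A^\intercal W' A+\Gamma)^{-1}A^\intercal\leq W'^{-1}$, valid for $\Gamma\geq0$ by a Schur-complement argument, bounds it by $d^\intercal (W^{\oplus n})^{-1}d = n(s-t)^\intercal W^{-1}(s-t)$. Both bounds hold for the same measurement simultaneously, so taking the supremum and then dividing by $n$ yields the converse.

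The main obstacle is the achievability step that attains exactly $W^{-1}$ rather than $(W+\gamma)^{-1}$. We expect it to need an $n$-dependent or limiting squeezed measurement on the concentrated copy, together with a check that it costs only $o(n)$ in the covariance term.
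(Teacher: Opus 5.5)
Your converse is correct, and it is genuinely different from and simpler than the paper's. The paper applies the mixing $S_n$ on the converse side too. It splits the seed into a one-copy block and an $(n-1)$-copy block, introduces the Schur-complemented seeds $\tilde\gamma_A^W,\tilde\gamma_B^W$, and controls the log-determinant ratio with a perturbation bound. It ends with the $(n-1)$-copy centred objective plus remainders of order $\|W-V\|_1\,(\lambda_{\min}(V)^{-1}+\lambda_{\min}(W)^{-1})$, which vanish after dividing by $n$.

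You instead observe that, for a fixed seed $\gamma$, the $n$-copy objective is exactly the centred objective plus $d^\intercal(W^{\oplus n}+\gamma)^{-1}d$, with $d=(s-t)^{\oplus n}$. You then bound the two pieces separately. The first is at most $D^\G((\rho^0_\G)^{\otimes n}\|(\sigma^0_\G)^{\otimes n})$. The second is at most $n(s-t)^\intercal W^{-1}(s-t)$, by $\gamma\geq 0$ and operator monotonicity of the inverse; in the paper's seed parametrisation your $(A,\Gamma)$ inequality is the case $A=\id$. This is just $\sup(f+g)\leq\sup f+\sup g$. It gives a remainder-free inequality at every finite $n$ and needs neither the mixing nor Schur complements. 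The paper's longer route only buys a converse written in the same one-copy/$(n-1)$-copy form as the achieving strategy.

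Your achievability is the paper's construction: passive mixing $F\otimes\id_{2m}$, then a product seed $\gamma_A\oplus\gamma_B$. However, the step you single out as the main obstacle is left open, and the fix you sketch is slightly off.

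No $n$-dependent seed and no $o(n)$ check are needed. Fix $\gamma_A\geq i\Omega$ and drop the single-copy centred term, which is a classical relative entropy and hence nonnegative. Letting $n\to\infty$ gives $\mathrm{Stein}_\G(\rho_\G\|\sigma_\G)\geq\mathrm{Stein}_\G(\rho^0_\G\|\sigma^0_\G)+(s-t)^\intercal(W+\gamma_A)^{-1}(s-t)$. Only then take the supremum over $\gamma_A$.

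What remains is $\sup_{\gamma\geq i\Omega}(s-t)^\intercal(W+\gamma)^{-1}(s-t)=(s-t)^\intercal W^{-1}(s-t)$, which is the paper's Proposition~\ref{thm:firstmoments}. Note that $\gamma$ cannot tend to $0$, since $\gamma\geq i\Omega$ forbids it. What you need is a supremum of the quadratic form along $s-t$, not an infimum of matrices.

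The supremum is reached in a homodyne limit. In that limit, an infinitely squeezed pure seed reading out the quadratures of a Lagrangian subspace $L$ turns the form into $\sup_{x\in L}\{2x^\intercal(s-t)-x^\intercal Wx\}$. This equals $(s-t)^\intercal W^{-1}(s-t)$ as soon as $W^{-1}(s-t)\in L$, and such an $L$ exists because every line is isotropic. Equivalently, for a single quadrature $e^\intercal R$, Cauchy--Schwarz gives $(e^\intercal(s-t))^2/(e^\intercal We)\leq(s-t)^\intercal W^{-1}(s-t)$. Equality holds at $e\propto W^{-1}(s-t)$, not at $W^{-1/2}(s-t)$ as you proposed. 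The paper reaches the same result via the Williamson decomposition of $W$ and rotated single-mode homodynes. With this step inserted, your argument is complete.
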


Theorem~\ref{thm:momentsep} reveals a fundamental asymmetry between first and second moments in the Gaussian Stein problem. The displacement information is intrinsically single-letter and can be optimised independently, so any nontrivial regularisation must originate from the covariance matrices alone. Indeed, a Gaussian measurement with seed QCM $\gamma$ maps a Gaussian state with QCM $V$ and first moments $s$ to a classical Gaussian distribution with mean $s$ and covariance matrix $(V+\gamma)/2$. The classical relative entropy between the two resulting distributions therefore decomposes into a contribution depending only on the covariance matrices and a quadratic term depending on the displacement difference $s-t$. In the asymptotic Stein setting, the result above shows that this displacement contribution can be optimised independently and is exactly single-letter. After removing this explicitly solvable term and inserting the closed formula for the relative entropy of classical Gaussian distributions, the Gaussian Stein exponent reduces to the centred covariance-matrix optimisation

\bb
\begin{aligned}
\mathrm{Stein}_\G(\rho^0_\G\|\sigma^0_\G)
=
\lim_{n\to\infty}\frac1n
\sup_{\gamma\geq i\Omega^{\oplus n}}
\bigg\{
&\frac12
\log
\frac{\det\!\left(W^{\oplus n}+\gamma\right)}
{\det\!\left(V^{\oplus n}+\gamma\right)}
+
\frac12
\Tr\!\left[
\left(W^{\oplus n}+\gamma\right)^{-1}
(V-W)^{\oplus n}
\right]
\bigg\}.
\end{aligned}
\label{eq:Stein_second_moments}
\ee
Thus, from this point onward, the additivity and regularisation questions concern only the covariance-matrix optimisation in Eq.~\eqref{eq:Stein_second_moments}. This isolates the genuinely many-copy part of the problem: the displacement sector is asymptotically solved, whereas any further possible collective Gaussian advantage must arise from how the measurements resolve differences in noise, squeezing, and correlations encoded in the covariance matrices.

\subsection{The additivity problem}
The covariance contribution remains a regularised quantity and is thus intractable to calculate. We therefore wish to study whether the term to be optimised,
\bb
\frac12
\log
\frac{\det\!\left(W^{\oplus n}+\gamma\right)}
{\det\!\left(V^{\oplus n}+\gamma\right)}
+
\frac12
\Tr\!\left[
\left(W^{\oplus n}+\gamma\right)^{-1}
(V-W)^{\oplus n}
\right],
\ee
is additive over the number of copies $n$, so that we can reduce the optimisation to measurements acting on a single copy. We find sufficient conditions under which this term does become additive, outlined in the following Theorem. 

\begin{thm}[(Sufficient conditions for additivity)]\label{thm:sufficientmain} Let \(V,W\) be \(m\)-mode quantum covariance matrices. Assume that \(W\) has flat symplectic spectrum, namely, there is a symplectic matrix \(S\) such that $SWS^\intercal = w\id_{2m}$.
If, in addition, 
\bb\label{eq:condition}
[\widetilde V,\Omega\widetilde V\Omega^\intercal]=0,
\ee
where  $\widetilde V\coloneqq SVS^\intercal$, then we have
\bb
    \rel{D^{\G}}
{V^{\oplus n}}{W^{\oplus n}}
=
n\rel{D^{\G}}{V}{W}.
\ee

\end{thm}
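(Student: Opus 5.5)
The inequality $D^{\G}(V^{\oplus n}\|W^{\oplus n})\geq n\,D^{\G}(V\|W)$ is immediate, because product Gaussian measurements $\gamma=\gamma_1^{\oplus n}$ are admissible and the classical relative entropy is additive on product distributions. The whole task is therefore the converse inequality. I would organise it as a symmetrisation argument over the copies.

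\emph{Step 1: reduction to a flat reference.} Since $S^{\oplus n}$ is symplectic and $\gamma\mapsto S^{\oplus n}\gamma S^{\intercal\oplus n}$ preserves the constraint $\gamma\geq i\Omega^{\oplus n}$, we may replace $W$ by $w\id_{2m}$ and $V$ by $\widetilde V$. In the objective of Eq.~\eqref{eq:Stein_second_moments}, we have $\Tr[(W^{\oplus n}+\gamma)^{-1}(V-W)^{\oplus n}]=\Tr[(w\id+\gamma)^{-1}(\widetilde V^{\oplus n}+\gamma)]-2mn$. Hence the objective becomes
\[
F_n(\gamma)=\tfrac12\big[\Tr X-\log\det X-2mn\big],\qquad X=(w\id+\gamma)^{-1/2}(\widetilde V^{\oplus n}+\gamma)(w\id+\gamma)^{-1/2}.
\]
This is the relative entropy between two centred classical Gaussians.

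\emph{Step 2: pinching onto copies.} Let $D_\epsilon=\bigoplus_{j=1}^n \epsilon_j\id_{2m}$ with $\epsilon_j=\pm1$. Each $D_\epsilon$ is symplectic and orthogonal, and it fixes both $w\id_{2mn}$ and $\widetilde V^{\oplus n}$. Therefore $F_n(D_\epsilon\gamma D_\epsilon)=F_n(\gamma)$, and the admissible set is invariant under these maps. Averaging over the $2^n$ sign patterns yields the block pinching $\gamma\mapsto\bigoplus_j\gamma_{jj}$. Each diagonal block satisfies $\gamma_{jj}\geq i\Omega$, being a compression of $\gamma-i\Omega^{\oplus n}\geq0$. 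On block-diagonal inputs, $F_n(\bigoplus_j\gamma_{jj})=\sum_j F_1(\gamma_{jj})\leq n\,D^{\G}(V\|W)$. So it suffices to show that pinching does not decrease $F_n$. A sufficient condition is that $F_n$ be concave along the segments $\gamma\to D_\epsilon\gamma D_\epsilon$, or, failing that, that $F_n(\gamma)\leq F_n(\text{pinched }\gamma)$ holds directly.

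\emph{Step 3 (main obstacle): the log-determinant inequality.} $F_n$ is not globally concave: $\log\det(w+\gamma)$ is concave, but $-\log\det(\widetilde V^{\oplus n}+\gamma)$ is convex. This is where both hypotheses must enter. My first attempt would be to restrict to pure (extremal) $\gamma$. The objective's dependence on $\gamma$ through $w\id+\gamma$ and $\widetilde V+\gamma$ suggests the optimum lies on the boundary $\gamma=ZZ^\intercal$ with $Z$ symplectic, although this needs proof. Next, I would use $[\widetilde V,\Omega\widetilde V\Omega^\intercal]=0$ to bring $\widetilde V$, by an orthogonal symplectic map, to a normal form in which $\widetilde V$ and $\Omega\widetilde V\Omega^\intercal$ are simultaneously diagonal, pairing eigenvalues $\lambda,\mu$ on conjugate quadratures. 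Such a map does not disturb $w\id$. In this frame, the single-copy optimiser $\gamma_1^\star$ should be explicitly computable and should commute with $\widetilde V$. I would then write the KKT conditions for $\gamma_1^\star$, with a multiplier $Y_1\geq0$ satisfying $Y_1(\gamma_1^\star-i\Omega)=0$. The goal is to show that $(\gamma_1^\star)^{\oplus n}$ with multiplier $Y_1^{\oplus n}$ is a global certificate. Concretely, I would aim to prove an inequality of the form
\[
F_n(\gamma)\leq n F_1(\gamma_1^\star)+\Tr\big[Y_1^{\oplus n}(\gamma-i\Omega^{\oplus n})\big]\cdot(\text{sign-adjusted}),
\]
valid for all admissible $\gamma$. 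This reduces to a log-determinant inequality comparing $\log\det(w+\gamma)-\log\det(\widetilde V^{\oplus n}+\gamma)$ with its linearisation at the product point. The commutation condition makes the relevant matrices simultaneously block-diagonalisable, and I expect it to turn the inequality into a family of two-by-two (per conjugate pair) determinant inequalities. These could then be handled by Fischer/Hadamard-type bounds and the scalar inequality $\log x\leq x-1$. Establishing this inequality is the step I expect to be hardest.
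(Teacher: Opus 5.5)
The lower bound and your Step 1 are fine. The gap is in the converse.

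\textbf{Step 2 aims at a false inequality.} Block pinching can strictly decrease $F_n$, so neither concavity along the sign-flip segments nor $F_n(\gamma)\le F_n\big(\bigoplus_j\gamma_{jj}\big)$ can hold. Take $m=1$, $n=2$, $W=w\id_2$, $\widetilde V=\diag(a,b)\neq w\id_2$, and let $\gamma$ be the two-mode squeezed vacuum seed with squeezing $r$. In the quadratures $(x_1\pm x_2)/\sqrt2$, $(p_1\pm p_2)/\sqrt2$ everything is diagonal: $\gamma=\diag(e^{2r},e^{-2r},e^{-2r},e^{2r})$ and $\widetilde V^{\oplus 2}=\diag(a,a,b,b)$. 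With $f(y)=y-1-\log y$, this gives $F_2(\gamma)\to\frac12\big[f(a/w)+f(b/w)\big]>0$ as $r\to\infty$. The pinched seed is $\cosh(2r)\,\id_4$, for which $F_2\to0$. Replacing each pinched block by its pure Williamson part (here the heterodyne seed $\id_2$) does not repair this. For $b=w=1$, $a=100$ one finds $\lim_r F_2(\gamma)\approx47.2$ against $2F_1(\id_2)\approx45.6$. Pinching destroys exactly the inter-copy purity that makes collective seeds competitive. A correct argument must therefore compare a collective pure seed with the \emph{optimal} product seed, not with its own marginals.

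\textbf{Step 3 leaves that comparison open, and the tools you name do not reach it.}
\begin{itemize}
\item A KKT multiplier certifies global optimality only for a concave objective. Since $F_n$ is not concave, your certificate inequality is just the theorem restated. Also, the single-copy optimum can sit at the homodyne boundary $x\to0$ or $x\to\infty$, where there are no multipliers.
\item The commutation condition diagonalises $\widetilde V$. But a pure $n$-copy seed $\Gamma=K\diag(X,X^{-1})K^\intercal$ has an arbitrary passive part $K\in\mathrm{Sp}(2N,\R)\cap\mathrm{O}(2N)$ mixing all modes and copies, so nothing splits into $2\times2$ blocks.
\item Fischer and Hadamard bound determinants from above. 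That is the wrong direction for $-\log\det(\widetilde V^{\oplus n}+\Gamma)$.
\end{itemize}

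\textbf{The missing ingredient} is the overlap lower bound $\log\det(A+B)\ge\sum_{i,j}|\langle\psi_i,\phi_j\rangle|^2\log(\alpha_i+\beta_j)$, where $(\alpha_i,\psi_i)$ and $(\beta_j,\phi_j)$ are eigenpairs of $A$ and $B$. The paper obtains it from operator concavity of $\log$ applied to the superoperator $L_A+R_B$ and its transpose-conjugate $L_B+R_A$, evaluated on $\id$.
\begin{itemize}
\item Take $A=V_n$, which is diagonal (this is where the commutation condition enters), and $B=\Gamma$. The overlaps are then the doubly stochastic weights $K_{\alpha\beta}^2$.
\item Because $W=w\id$ (this is where flatness enters), $\log\det(w\id+\Gamma)$ and $\Tr[(w\id+\Gamma)^{-1}(V_n+\Gamma)]$ are \emph{exactly} linear in the same weights.
\item Hence $F_n(\Gamma)\le\frac12\sum_{\alpha,\beta}K_{\alpha\beta}^2 f\big(\tfrac{v_\alpha+\mathfrak x_\beta}{w+\mathfrak x_\beta}\big)$, with $v$ and $\mathfrak x=(x,x^{-1})$ the diagonals of $V_n$ and $\diag(X,X^{-1})$.
\item The block form $K=\begin{pmatrix}C&-S\\S&C\end{pmatrix}$ groups the $Q_r$ and $P_r$ terms into the single-mode objective of mode $r$. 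It is evaluated at $x_\ell$ with weight $C_{r\ell}^2$, or at $1/x_\ell$ with weight $S_{r\ell}^2$.
\item Bounding each by its supremum $\mathfrak F_r$ and using $\sum_\ell(C_{r\ell}^2+S_{r\ell}^2)=1$ gives $F_n(\Gamma)\le n\sum_{j=1}^m\mathfrak F_j$. Diagonal product seeds $\bigoplus_j\diag(x_j,x_j^{-1})$ attain this bound.
\end{itemize}
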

Theorem~\ref{thm:sufficientmain} gives a concrete structural criterion for when collective Gaussian measurements are unnecessary. The condition that $W$ have a flat symplectic spectrum means that, after a suitable Gaussian change of coordinates, the reference covariance matrix is proportional to the identity. The additional commutation condition
\[
[\widetilde V,\Omega\widetilde V\Omega^\intercal]=0
\]
ensures that the covariance structure of the other hypothesis can be brought, by passive Gaussian transformations, to a form in which the relevant quadratures decouple mode by mode. In this situation, collective mode mixing cannot improve the asymptotic discrimination rate beyond what is already achievable with optimised single-copy Gaussian measurements.

The criterion covers, in particular, the important cases of all pairs of general mixed single-mode Gaussian states and all pairs of pure multimode Gaussian states, as well as further families of mixed multimode states. As such, for these states, and more generally for any pair satisfying Theorem~\ref{thm:sufficientmain}, combining this result with Theorem~\ref{thm:momentsep}, we see the Gaussian Stein exponent in Eq.~\eqref{Gaussian_Stein_simplified_11} drastically simplifies to the single-letter optimisation
\bb
\mathrm{Stein}_\G(\rho_\G\|\sigma_\G) =
\sup_{\gamma\geq i\Omega}
\bigg\{
&\frac12
\log
\frac{\det\!\left(W+\gamma\right)}
{\det\!\left(V+\gamma\right)}
+
\frac12
\Tr\!\left[
\left(W+\gamma\right)^{-1}
(V-W)
\right]
\bigg\} + (s-t)^\intercal W^{-1}(s-t).
\ee
The significance of this reduction is that the asymptotic problem no longer contains any irreducible many-copy optimisation: all remaining difficulty is finite-dimensional and confined to the optimisation of a single-copy Gaussian measurement.

\section{Discussion}

We have shown that, in the all-Gaussian hypothesis-testing scenario, the key quantity governing the asymptotic decay rate of the type-II error is the regularised Gaussian-measured relative entropy, or Gaussian Stein exponent. In the asymptotic limit, this quantity separates into two contributions: one depending on the first moments of the states, and one depending only on their covariance matrices. The first-moment contribution is intrinsically single-letter, admits a closed-form solution, and can be optimised independently of the covariance contribution. Consequently, any nontrivial regularisation, and any possible advantage arising from genuinely collective Gaussian measurements, is entirely associated with the second moments.

We have also identified sufficient conditions under which this covariance-only contribution is additive in the number of copies and therefore admits a single-letter characterisation. These conditions include, in particular, all pairs of single-mode Gaussian states and all pairs of pure multimode Gaussian states. Whenever the conditions of Theorem~\ref{thm:sufficientmain} are satisfied, the regularisation can be removed and the Gaussian Stein exponent reduces to an optimisation over a single-copy Gaussian measurement, together with the explicit displacement contribution derived in Theorem~\ref{thm:momentsep}. For centred states belonging to this class, collective Gaussian measurements therefore provide no asymptotic advantage over independent single-copy measurements.

Our results also have a direct operational interpretation. The contribution of the first moments can be isolated asymptotically by a passive Gaussian mixing of the copies that concentrates the displacement difference into a single output copy, leaving the covariance matrices unchanged. The remaining copies can then be used to discriminate the covariance matrices. Thus, the asymptotic treatment of first and second moments can be separated physically as well as mathematically. In the additive cases identified here, the covariance contribution can subsequently be extracted using only single-copy Gaussian measurements, avoiding the need for genuinely collective measurements across many copies.

The most natural direction for future work is to determine necessary and sufficient conditions for additivity of the covariance-only contribution, extending Theorem~\ref{thm:sufficientmain}. It remains open whether there exist pairs of Gaussian states for which this contribution is genuinely non-additive, and hence whether collective Gaussian measurements can provide a strict asymptotic advantage for centred Gaussian-state discrimination. A complete solution of this question would yield a full characterisation of the Gaussian Stein exponent for arbitrary multimode Gaussian states and clarify the precise role of collective Gaussian measurements in asymptotic hypothesis testing.

\begin{acknowledgments}
\section{Acknowledgments}
LL and FG acknowledge financial support from the European Union under the European Research Council (ERC Grant Agreement No.~101165230).
LT and GA acknowledge financial support from the Engineering and Physical Sciences Research Council (EPSRC Grant No.~EP/W524402/1). J.R. was supported by the BMFTR (QR.N), the Clusters of Excellence (ML4Q, MATH+), QuantERA (SDPCode), the Munich Quantum Valley, Berlin Quantum, the Quantum Flagship (Millenion, PASQuanS2), the DFG (CRC 183), and the European Research Council (DebuQC).
We acknowledge early discussions with Bartosz Regula on the topic of this work.

\subsection{AI Statement}
The initial ideas, special cases, examples, interpretations, and core analytical approach underlying this work were developed by the authors. Most main results were established at this stage. At later stages, large language models, including ChatGPT (GPT-5.5, GPT-5.6 Sol, and GPT-6 Astra), were used as research assistants to help explore extensions of the arguments, refine proofs of some of the general results, and improve parts of the presentation. All mathematical statements, proofs, and conclusions included in the manuscript were independently checked and verified by the authors. \end{acknowledgments}

\bibliographystyle{apsrevfixedwithtitles}
\bibliography{biblio}

\bigskip
\appendix

\newpage
\section*{APPENDIX}
\section{Preliminaries on Gaussian states and measurements}
\label{sec:gaussian_preliminaries}

We briefly recall the phase-space formalism for Gaussian quantum systems; see, e.g.,~\cite{BUCCO} for a comprehensive introduction. 
An $m$-mode bosonic system is described by a vector of canonical quadrature operators
\bb
R\coloneqq (Q_1,P_1,\ldots,Q_m,P_m)^\intercal,
\ee
satisfying the canonical commutation relations
\bb
[R_j,R_k]=i\Omega_{jk},
\qquad
\Omega\coloneqq \bigoplus_{j=1}^m \Omega_1,
\qquad
\Omega_1\coloneqq
\begin{pmatrix}
0&1\\
-1&0
\end{pmatrix}.
\ee
Equivalently, we may define the symplectic form
\bb
\Omega\coloneqq\begin{pmatrix}
        0&\mathds{1}\\-\mathds{1}&0
        \end{pmatrix},
\ee
which corresponds to the quadrature ordering $R\coloneqq (Q_1,\ldots,Q_m,P_1,\ldots P_m)^\intercal$.
A Gaussian state $\rho_\G$ is completely specified by its vector of first moments
\bb
s_j\coloneqq \Tr[\rho_\G R_j]
\ee
and its quantum covariance matrix (QCM)
\bb
V_{jk}
\coloneqq
\Tr\!\left[
\rho_\G
\left\{
R_j-s_j,R_k-s_k
\right\}
\right].
\ee
The uncertainty principle is equivalent to the condition
\bb
V\geq i\Omega.
\ee
Conversely, every real symmetric matrix satisfying this condition is the QCM of a Gaussian state. 
Gaussian unitaries act affinely on phase space, transforming
\bb
R\longmapsto SR+d,
\ee
where $d\in\R^{2m}$ and $S$ is symplectic,
\bb
S\Omega S^\intercal=\Omega.
\ee
Accordingly, the first and second moments transform as
\bb
s\longmapsto Ss+d,
\qquad
V\longmapsto SVS^\intercal.
\ee
By Williamson's Theorem, every QCM can be written as
\bb
V
=
S
\left(
\bigoplus_{j=1}^m \nu_j\id_2
\right)
S^\intercal,
\label{eq:will}
\ee
where $S$ is symplectic and $\nu_j\geq1$ are the symplectic eigenvalues of $V$. In particular, a Gaussian state is pure if and only if $\nu_j=1$ for all $j$.
A Gaussian measurement can be parameterised by a Gaussian seed with QCM
\bb
\gamma\geq i\Omega.
\ee
When a Gaussian measurement with seed $\gamma$ is performed on a Gaussian state with QCM $V$ and first moments $s$, its outcome is a classical Gaussian random variable with mean $s$ and covariance matrix
\bb
\frac{V+\gamma}{2}.
\ee
We denote the corresponding measurement by $\MM_\gamma$. Homodyne measurements are recovered as limiting cases in which the seed becomes infinitely squeezed. Tensor products of Gaussian measurements are again Gaussian measurements, with
\bb
\MM_{\gamma_1}\otimes\MM_{\gamma_2}
=
\MM_{\gamma_1\oplus\gamma_2}.
\ee
We will also repeatedly use the classical relative entropy between two Gaussian probability distributions with covariance matrices $A,B>0$ and first moments $a,b$. In our notation,
\bb
D(A,a\|B,b)
=
\frac12\log\frac{\det B}{\det A}
+
\frac12
\Tr\!\left[
B^{-1}
\left(
A-B+(a-b)(a-b)^\intercal
\right)
\right].
\label{Classical_RelEnt}
\ee
Consequently, for two Gaussian states $\rho_\G$ and $\sigma_\G$ with QCMs $V,W$ and first moments $s,t$, respectively,
\bb
D\!\left(
\MM_\gamma(\rho_\G)
\,\big\|\,
\MM_\gamma(\sigma_\G)
\right)
=
D\!\left(
\frac{V+\gamma}{2},s
\,\bigg\|\,
\frac{W+\gamma}{2},t
\right).
\label{eq:gaussmeasD}
\ee

\section{Hypothesis testing\\with Gaussian states and measurements}
We now establish the reduction of the Gaussian Stein exponent to the regularised Gaussian-measured relative entropy. Recall that $\beta_{n,\e}$ denotes the optimal type-II error achievable by a Gaussian strategy under the constraint $\alpha_n\leq\e$. We will show that the corresponding asymptotic exponent is exactly given by the regularisation of $D^\G$.

\begin{prop}[(Regularised Gaussian Stein's lemma)]\label{prop:regstein} We have
\begin{align}
\mathrm{Stein}_\G(\rho_\G\|\sigma_\G) =&\ \lim_{n\to \infty} \frac1n\, D^\G\big(\rho_\G^{\otimes n}\,\big\|\,\sigma_\G^{\otimes n}\big).
\label{eq:steinreg}
\end{align}
\begin{proof}
We first show that the limit in~\eqref{eq:steinreg} exists. This follows by a standard argument, noting that the following super-additivity inequality holds for the measured Gaussian relative entropy
\bb
\begin{aligned}
D^\G\big(\rho_\G^{\otimes (n+m)}\,\big\|\,\sigma_\G^{\otimes (n+m)}\big)
&\geq
D\left(
(\MM_n\otimes\MM_m)(\rho_\G^{\otimes(n+m)})
\,\big\|\,
(\MM_n\otimes\MM_m)(\sigma_\G^{\otimes(n+m)})
\right)
\\
&=
D\left(
\MM_n(\rho_\G^{\otimes n})
\,\big\|\,
\MM_n(\sigma_\G^{\otimes n})
\right)
+
D\left(
\MM_m(\rho_\G^{\otimes m})
\,\big\|\,
\MM_m(\sigma_\G^{\otimes m})
\right) \\
&=
D^\G\left(
\rho_\G^{\otimes n}
\,\big\|\,
\sigma_\G^{\otimes n}
\right)
+
D^\G\left(
\rho_\G^{\otimes m}
\,\big\|\,
\sigma_\G^{\otimes m}
\right)
\end{aligned}
\ee
where in the first line we used that tensor products of Gaussian measurements are valid Gaussian measurements, in the second line we used the additivity of the classical relative entropy for factorised probability distributions, and in the last line we have taken the supremum over the Gaussian measurements $\MM_n,\MM_m$. Now, by Fekete's lemma, the limit exists and is equal to the supremum
\bb
\lim_{n\to \infty} \frac1n\, D^\G\big(\rho_\G^{\otimes n}\,\big\|\,\sigma_\G^{\otimes n}\big) = \sup_{n\geq 1} \frac1n\, D^\G\big(\rho_\G^{\otimes n}\,\big\|\,\sigma_\G^{\otimes n}\big).
\ee
Let us call this limit $D^\G_\infty\big( \rho_G\,\big\|\,\sigma_\G\big)$. We now prove that the Stein exponent is a lower bound. Consider any Gaussian strategy on $n$ copies with type-I error $\alpha_n$ and type-II error $\beta_n$. Write its Gaussian measurement $\MM_n$, then the final hypothesis testing decision is a classical channel $\big(\MM_n(\rho_\G^{\otimes n}),\MM_n(\sigma_\G^{\otimes n})\big) \to \big( (1-\alpha_n,\alpha_n), (\beta_n,1-\beta_n) \big)$. Then, by data-processing of the classical relative entropy
\bb
D^\G\big(\rho_\G^{\otimes n}\,\big\|\,\sigma_\G^{\otimes n}\big) &\geq D\left(
\MM_n(\rho_\G^{\otimes n})
\,\big\|\,
\MM_n(\sigma_\G^{\otimes n})
\right) \\
&\geq D\left( (1-\alpha_n,\alpha_n) \,\big\|\, (\beta_n,1-\beta_n) \right) \\
&= (1-\alpha_n) \log\frac{1-\alpha_n}{\beta_n} + \alpha_n \log\frac{\alpha_n}{1-\beta_n} \\
&=-h_2(\alpha_n) -(1-\alpha_n)\log\beta_n -\alpha_n\log(1-\beta_n)
\\
&\geq-h_2(\alpha_n)-(1-\alpha_n)\log\beta_n,
\ee
where we have defined $h_2(x)\coloneqq -x\log x-(1-x)\log (1-x)$.
Whenever $\alpha_n\leq \varepsilon$, this establishes
\bb
-\frac1n\log\beta_n \leq \frac{D^\G\big(\rho_\G^{\otimes n}\,\big\|\,\sigma_\G^{\otimes n}\big)+\log 2}{n(1-\varepsilon)}.
\ee
In particular, this holds for the optimal $\beta_{n,\varepsilon}$, therefore
\bb
\liminf_{n\to\infty}
-\frac1n\log\beta_{n,\varepsilon} \leq \frac{D^\G_\infty\big( \rho_G\,\big\|\,\sigma_\G\big)}{1-\varepsilon}.
\ee
Letting $\varepsilon \to 0^+$ gives
\bb
\mathrm{Stein}_\G(\rho_\G\|\sigma_\G) \leq D^\G_\infty\big( \rho_G\,\big\|\,\sigma_\G\big).
\ee
We now prove the converse upper bound. By the definition of measured Gaussian relative entropy, for any $\delta>0$ we can choose a Gaussian measurement $\MM_k$ on $k$ input copies satisfying
\bb
D\left( \MM_k(\rho_\G^{\otimes k})\,\big\|\,\MM_k(\sigma_\G^{\otimes k})\right) \geq D^\G\big(\rho_\G^{\otimes k}\,\big\|\,\sigma_\G^{\otimes k}\big) - \delta.
\label{eq:measgauss}
\ee
Now split $n = rk + s$ for some integers $r,s$ with $s<k$. On the first $rk$ copies we perform the Gaussian measurement $\MM_k^{\otimes r}$ and we ignore the remaining $s$ copies. Since the remaining problem now is about distinguishing classical probability distributions, we now invoke the classical Stein's lemma. Let $0<\varepsilon<1$, then there exists a classical decision rule with type-I error $\alpha_r\leq \varepsilon$ such that the type-II $\{\beta_r\}_{r=1}^\infty$ error sequence satisfies
\bb
\lim_{r\to\infty} -\frac1r\log\beta_r = D\left( \MM_k(\rho_\G^{\otimes k})\,\big\|\,\MM_k(\sigma_\G^{\otimes k})\right).
\label{eq:gaussblock}
\ee
We now look at the full $\MM_k^{\otimes r}$ strategy, when $n\to\infty$, the resulting type-II error on the $n$ copies satisfies 
\bb
\liminf_{r\to\infty} \frac{r}{rk+s}\left(-\frac1r\log\beta_r\right)
&= \frac{1}{k} D\left( \MM_k(\rho_\G^{\otimes k})\,\big\|\,\MM_k(\sigma_\G^{\otimes k})\right) \\
&\geq \frac{1}{k} \left( D^\G\big(\rho_\G^{\otimes k}\,\big\|\,\sigma_\G^{\otimes k}\big) - \delta\right).
\ee
Where in the first equality we used \eqref{eq:gaussblock} and in the second line we used \eqref{eq:measgauss}.
Sending $k\to\infty$ concludes the proof of the achievability and gives \eqref{eq:steinreg}.
\end{proof}
\end{prop}



This first result will be fundamental in what follows, since it allows us to turn the Gaussian hypothesis testing problem into a classical surrogate. In particular, 
the output of a Gaussian measurement with a seed $\gamma$ on a Gaussian state with QCM $V$ and first moments $s$ is a Gaussian probability distribution with covariance matrix $\frac{V+\gamma}{2}$ and first moments $s$~\cite[5.139]{BUCCO}. Because of this, using~\eqref{Classical_RelEnt}, we can rephrase the result in proposition~\ref{prop:regstein} as follows:
\bb
\mathrm{Stein}_\G(\rho_\G\|\sigma_\G) &= \lim_{n\to\infty} \frac1n\, \sup_{\gamma \geq i\Omega^{\oplus n}} D\bigg( \frac{V^{\oplus n} + \gamma}{2},\, s^{\oplus n} \,\bigg\|\, \frac{W^{\oplus n} + \gamma}{2},\, t^{\oplus n} \bigg) \\
&= \lim_{n\to\infty} \frac1n\, \sup_{\gamma \geq i\Omega^{\oplus n}} D\Big( V^{\oplus n} + \gamma,\, \sqrt{2}\, (s-t)^{\oplus n} \,\Big\|\, W^{\oplus n} + \gamma,\, 0 \Big) \\
&= \lim_{n\to\infty} \frac1n \sup_{\gamma \geq i\Omega^{\oplus n}} \Bigg\{ \frac12 \log \frac{\det \left(W^{\oplus n} \!+\! \gamma \right)}{\det \left(V^{\oplus n} \!+\! \gamma\right)} \\
&\hspace{16.7ex} + \frac12 \Tr \left[ \left(W^{\oplus n} \!+\! \gamma \right)^{-1} \left( (V\!-\!W)^{\oplus n} + 2 \left( (s\!-\!t)^{\oplus n} \right) \left( (s\!-\!t)^{\oplus n} \right)^\intercal \right) \right] \Bigg\}\, .
\label{Stein_calculation_0}
\ee
Note that the quantity inside the optimisation is monotonically non-increasing in $\gamma$, and thus we can restrict the optimisation to pure measurements only.
The central question is whether the quantity inside the optimisation is additive under tensor products. If this were the case, the regularisation over $n$ copies would collapse to the single-copy problem, and it would suffice to optimise over a single-system Gaussian seed QCM $\gamma$. Establishing such an additivity property would therefore already provide a \emph{single-letter} characterisation of the Gaussian Stein exponent, even in the absence of a closed-form solution to the remaining finite-dimensional optimisation.

\section{The first moments term}
We begin by focusing on the simplest case: that of two Gaussian states with the same covariance matrices but differing first moments, such as two coherent states. Here we are only left with one term in the Stein exponent \eqref{Stein_calculation_0}, namely we aim to calculate
\bb
\mathrm{Stein}_\G(\rho_\G\|\sigma_\G)
&= \lim_{n\to\infty} \frac1n \sup_{\gamma \geq i\Omega^{\oplus n}} \Bigg\{ \frac12 \Tr \left[ \left(V^{\oplus n} \!+\! \gamma \right)^{-1} \left(2 \left( (s\!-\!t)^{\oplus n} \right) \left( (s\!-\!t)^{\oplus n} \right)^\intercal \right) \right] \Bigg\}\, .
\ee
In this case, we find the problem becomes additive over the number of copies of the states, and we recover a single-letter, closed form expression.

\begin{prop}\label{thm:firstmoments}
    Let $\rho_G$ and $\sigma_G$ be Gaussian states with first moments $s$ and $t$ respectively, and both with second moments $V$. Then $D^G(\rho_G^{\otimes n}(V,s)\|\sigma_G^{\otimes n}(V,t))$ is additive over $n$. Furthermore, it is achieved by a homodyne measurement.
\end{prop}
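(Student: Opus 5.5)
The plan is to compute both sides exactly: an upper bound valid for every collective Gaussian measurement, and a matching lower bound already attained by single-copy homodyne-type measurements. Since the two states share the covariance matrix $V$, the determinant and $(V-W)$ terms in~\eqref{Stein_calculation_0} vanish. Writing $d\coloneqq s-t$ and $\delta\coloneqq d^{\oplus n}$, the $n$-copy quantity is therefore
\[
D^\G\big(\rho_\G^{\otimes n}\,\big\|\,\sigma_\G^{\otimes n}\big)=\sup_{\gamma\geq i\Omega^{\oplus n}} \delta^\intercal\big(V^{\oplus n}+\gamma\big)^{-1}\delta .
\]
The target is to show that this equals $n\, d^\intercal V^{-1} d$ for every $n$, which gives additivity and matches the first-moment term of Theorem~\ref{thm:momentsep}.

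\emph{Upper bound.} Any seed satisfies $\gamma\geq i\Omega^{\oplus n}$. Since $\gamma$ is real symmetric, taking real parts of this condition shows $\gamma\geq 0$; in fact $\gamma>0$, because a QCM is invertible. Hence $V^{\oplus n}+\gamma> V^{\oplus n}$, and operator monotonicity of the inverse gives $(V^{\oplus n}+\gamma)^{-1}< (V^{-1})^{\oplus n}$. Consequently
\[
\delta^\intercal\big(V^{\oplus n}+\gamma\big)^{-1}\delta\leq \delta^\intercal (V^{-1})^{\oplus n}\delta = n\, d^\intercal V^{-1}d
\]
for every collective Gaussian measurement.

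\emph{Lower bound with single-copy measurements.} By superadditivity, as in Proposition~\ref{prop:regstein}, it suffices to attain $d^\intercal V^{-1}d$ on one copy with a sequence of seeds $\gamma_k$; tensor powers $\gamma_k^{\oplus n}$ then give $n$ times the same value. Set $x\coloneqq V^{-1}d$. Use the variational formula
\[
d^\intercal A^{-1}d=\sup_y \big\{2y^\intercal d-y^\intercal A y\big\}
\]
with $A=V+\gamma$ and the trial vector $y=x$. This gives
\[
d^\intercal(V+\gamma)^{-1}d\;\geq\; d^\intercal V^{-1}d - x^\intercal\gamma x .
\]
It remains to make $x^\intercal\gamma_k x\to0$ with pure seeds. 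Because passive (orthogonal symplectic) transformations, identified with $U(m)$, act transitively on the unit sphere of $\R^{2m}$, there is an orthogonal symplectic $O$ with $O e_1 = x/|x|$, where $e_1$ is the $Q_1$ direction. Take
\[
\gamma_k\coloneqq O\big(\diag(1/k,k)\oplus\id_{2m-2}\big)O^\intercal ,
\]
which is a pure QCM. Then $x^\intercal\gamma_k x=|x|^2/k\to0$, so the supremum equals $d^\intercal V^{-1}d$. As $k\to\infty$ the seed becomes infinitely squeezed along $x$, i.e.\ the optimum is reached in the limit by a homodyne measurement of the quadrature $x^\intercal R$. This proves both additivity and the homodyne claim.

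The only step needing care is the achievability construction: checking transitivity of the passive group on directions, and making sure the infinitely squeezed limit is correctly interpreted as a homodyne measurement. The remaining steps are routine operator inequalities.
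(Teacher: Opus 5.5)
Your argument is correct, but it takes a different route from the paper's. The paper first solves the single-mode case by explicitly optimising over pure seeds $R_\theta\diag(z,1/z)R_\theta^\intercal$, finding $z\to0$ and $\tan\theta=(ay-cx)/(bx-cy)$. It then treats $m$ modes via the Williamson form $V=SDS^\intercal$: it sets $u=S^{-1}(s-t)$ and uses a product ansatz $\gamma=\bigoplus_i\gamma_i$ of single-mode homodynes on the normal modes. The matching upper bound is simply attributed to the classical relative entropy of the Wigner functions, and additivity in $n$ is left implicit (apply the multimode statement to $V^{\oplus n}$).

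You instead prove the upper bound directly from $\gamma\geq0$ and operator monotonicity of the inverse. You obtain achievability in one step from the variational formula with trial vector $x=V^{-1}(s-t)$, plus a single seed squeezed along $x$, which needs only transitivity of the passive group on the unit sphere. This bypasses both the single-mode calculus and the Williamson reduction, and treats all $m$ and $n$ uniformly. It also shows that homodyning the single quadrature $x^\intercal R$ already suffices, rather than $m$ homodynes on the normal modes. This agrees with the paper's single-mode angle, since for $V=\begin{pmatrix}a&c\\c&b\end{pmatrix}$ one has $V^{-1}(s-t)\propto(bx-cy,\,ay-cx)^\intercal$.

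The step you flag as delicate can in fact be avoided. An exact homodyne of $\hat x^\intercal R$ gives outcome variance $\hat x^\intercal V\hat x/2$ under both hypotheses, hence relative entropy $(\hat x^\intercal d)^2/(\hat x^\intercal V\hat x)$. For $\hat x\propto V^{-1}d$ this equals $d^\intercal V^{-1}d$, by saturation of Cauchy--Schwarz, so no limiting argument is needed.
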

\begin{proof}
Start by considering two single-mode Gaussian states $\rho_\G$ and $\sigma_\G$ with the same QCM and different displacements, say
$$V=W=\left(\begin{array}{cc}a & c \\ c & b\end{array}\right)\quad \text{and}\quad s-t=\left(\begin{array}{c} x \\ y\end{array}\right).$$ 
Write a general pure single-mode Gaussian measurement as
\begin{equation}\label{pureseed}
    \gamma=R_\theta \cdot {\rm diag}(z, 1/z) \cdot R_\theta^\intercal,
\end{equation}
with $z >0$ and $R_\theta = \left(
\begin{array}{cc}
 \cos \theta  & -\sin \theta  \\
 \sin \theta  & \cos \theta  \\
\end{array}
\right)$. The optimal measurement parameter values are $z \rightarrow 0$ and $\tan \theta = (ay-cx)/(bx-cy)$, yielding
\bb \label{Stein_1m_sameV}
\mathrm{Stein}_\G(\rho_\G\|\sigma_\G) = 
\frac{a y^2+b x^2-2 c x y}{a b-c^2} = 
(s-t)^\intercal\ V^{-1}\ (s-t)\,, 
\ee
which achieves the theoretical upper bound determined by the classical relative entropy between the Wigner distributions.

Now for $m$-mode states, a factorised homodyne measurement acting independently on each of the modes solves again the problem since
\bb
\sup_{\gamma\geq i \Omega} (s-t)^\intercal (V+\gamma)^{-1}(s-t) &= \sup_{\gamma \geq i \Omega} u^\intercal (D+\gamma)^{-1} u \geq \sum_{i=1}^m \sup_{\gamma_i \geq i \Omega_1} u_i^\intercal (D_i+\gamma_i)^{-1} u_i \\ &=  \sum_{i=1}^m u_i^\intercal D_i^{-1} u_i = (s-t)^\intercal V^{-1}(s-t), 
\ee
where we expressed $V$ in terms of its Williamson decomposition as $V = S D S^\intercal$, defined $u \coloneqq S^{-1}(s-t)$ and picked the factorised ansatz $\gamma = \oplus_i^m \gamma_i$, with each $\gamma_i$ optimising the problem independently in the $i$-th mode.
\end{proof}

\section{Decoupling first and second moments}
We next show that, in the asymptotic Stein setting, the optimisation in Eq.~\eqref{Stein_calculation_0} exhibits a fundamental separation between the contributions of first and second moments. In particular, the first-moment contribution is intrinsically single-letter and admits a simple closed-form expression, so that all nontrivial asymptotic regularisation is confined to the second moments alone.

\begin{thm}[(Decoupling of first and second moments)]\label{prop:momentsepapp} For any pair of Gaussian states $\rho_\G$ and $\sigma_\G$, with first moments $s$ and $t$, and covariance matrices $V$ and $W$, respectively, the corresponding Gaussian Stein exponent factorises as
\bb\label{eq:sfactor}
\mathrm{Stein}_\G(\rho_\G\|\sigma_\G) = \mathrm{Stein}_\G(\rho^0_\G\|\sigma^0_\G) + (s-t)^\intercal W^{-1}(s-t), 
\ee
where $\rho^0_\G$ and $\sigma^0_\G$ are the Gaussian states obtained by setting the first moments of $\rho_\G$ and $\sigma_\G$ to zero.
\end{thm}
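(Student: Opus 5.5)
We prove the two inequalities separately, working throughout with the $n$-copy expression in Eq.~\eqref{Stein_calculation_0}. Write $\Delta\coloneqq s-t$, $\Gamma_n\coloneqq W^{\oplus n}+\gamma$ and $\Delta_n\coloneqq\Delta^{\oplus n}$. For a fixed seed $\gamma$, the objective splits as $F_n(\gamma)+\Delta_n^\intercal\Gamma_n^{-1}\Delta_n$. Here $F_n(\gamma)$ is the centred objective, which contains the log-determinant and the $(V-W)^{\oplus n}$ trace term. The second piece is the displacement term, with the factor $2$ and the factor $\frac12$ cancelling.

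\textbf{Upper bound.} Since $\gamma\geq i\Omega^{\oplus n}$ and $i\Omega^{\oplus n}$ is Hermitian with $\pm$-symmetric spectrum, taking real parts gives $\gamma\geq 0$. Hence $\Gamma_n\geq W^{\oplus n}$, and therefore $\Gamma_n^{-1}\leq (W^{-1})^{\oplus n}$. It follows that $\Delta_n^\intercal\Gamma_n^{-1}\Delta_n\leq n\,\Delta^\intercal W^{-1}\Delta$. We then take the supremum of $F_n$ separately, divide by $n$, and apply Proposition~\ref{prop:regstein}. This yields $\mathrm{Stein}_\G(\rho_\G\|\sigma_\G)\leq \mathrm{Stein}_\G(\rho^0_\G\|\sigma^0_\G)+\Delta^\intercal W^{-1}\Delta$.

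\textbf{Lower bound via mixing.} Let $O$ be the $n\times n$ real orthogonal matrix whose first row is $(1,\dots,1)/\sqrt n$, and set $U\coloneqq O\otimes \id_{2m}$. This $U$ is orthogonal and symplectic, so it is a passive Gaussian unitary across the copies. Conjugation by $U$ leaves $V^{\oplus n}$ and $W^{\oplus n}$ invariant, because they have the form $\id_n\otimes V$ and $\id_n\otimes W$. Meanwhile it maps $\Delta_n$ to $\sqrt n\,\Delta\oplus 0^{\oplus(n-1)}$. Choose a product seed $\gamma'=\gamma_1\oplus\gamma_{n-1}$ in the rotated frame, and set $\gamma=U^\intercal\gamma' U$, which is a legitimate Gaussian measurement. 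Using the invariance of determinants and traces, the objective becomes the sum of two pieces. The first is the single-copy objective on the first copy, with centred part $F_1(\gamma_1)$ and displacement contribution $n\,\Delta^\intercal(W+\gamma_1)^{-1}\Delta$. The second is the centred $(n-1)$-copy objective $F_{n-1}(\gamma_{n-1})$.

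Next we use Proposition~\ref{thm:firstmoments}, specifically the single-mode homodyne computation combined with the Williamson argument applied to $W$ in place of $V$. This gives seeds $\gamma_1^{(k)}$ with $\Delta^\intercal(W+\gamma_1^{(k)})^{-1}\Delta\to\Delta^\intercal W^{-1}\Delta$. Fixing one such seed, we find
\[
\tfrac1n D^\G\big(\rho_\G^{\otimes n}\|\sigma_\G^{\otimes n}\big)\geq \tfrac1n F_1(\gamma_1^{(k)})+\Delta^\intercal(W+\gamma_1^{(k)})^{-1}\Delta+\tfrac{n-1}{n}\cdot\tfrac{1}{n-1}\sup F_{n-1}.
\]
We then let $n\to\infty$ using Proposition~\ref{prop:regstein}, and afterwards let $k\to\infty$.

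\textbf{Main obstacle.} The delicate point is that $F_1(\gamma_1^{(k)})$ must stay finite for each fixed $k$, so that it vanishes after division by $n$. This is why we use a finitely squeezed seed $z>0$ rather than the homodyne limit $z\to 0$ itself. For each fixed $z>0$ we have $W+\gamma_1>0$ and $V+\gamma_1>0$, so $F_1$ is finite, and the order of limits ($n$ first, then $k$) resolves the issue. A second point to check is that $\gamma=U^\intercal\gamma' U$ still satisfies $\gamma\geq i\Omega^{\oplus n}$. This holds because $U$ commutes with $\Omega^{\oplus n}=\id_n\otimes\Omega$.
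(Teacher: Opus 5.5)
Your proof is correct. The achievability half is essentially the paper's argument: the same passive mixing $\id_{2m}\otimes O_n$ concentrates $\sqrt n\,(s-t)$ into one copy, and a block-diagonal seed then devotes that copy to the displacement and the remaining $n-1$ copies to the centred problem. Your worry about keeping $F_1(\gamma_1^{(k)})$ finite is moot: it is a classical relative entropy, hence nonnegative, so it can simply be dropped from a lower bound.

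The converse is where you genuinely differ. The paper applies the mixing again and rewrites the objective through the Schur complements $\widetilde\gamma_A^W$ and $\widetilde\gamma_B^W$. It then bounds the log-determinant ratio and the single-copy trace term by $\gamma$-independent constants of order $\|W-V\|_1/\lambda_{\min}$, which reduces the problem to an $(n-1)$-copy centred one plus $n(s-t)^\intercal W^{-1}(s-t)$. You instead observe that only the displacement term needs controlling. Since $\gamma\geq 0$, you have $(W^{\oplus n}+\gamma)^{-1}\leq (W^{-1})^{\oplus n}$, and $\sup(f+g)\leq\sup f+\sup g$ then gives the clean finite-$n$ inequality $D^\G\big(\rho_\G^{\otimes n}\|\sigma_\G^{\otimes n}\big)\leq D^\G\big((\rho^0_\G)^{\otimes n}\|(\sigma^0_\G)^{\otimes n}\big)+n(s-t)^\intercal W^{-1}(s-t)$. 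This uses no mixing, no Schur complements and no additive error constant.

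Your route is shorter and yields a sharper non-asymptotic bound. The paper's route additionally shows that any $n$-copy seed can be traded for an $(n-1)$-copy centred seed at bounded cost. That structural statement is not needed for the theorem as stated.
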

    \begin{proof} We start representing the moment space as $(\R^{2m})^{\oplus n} \cong \R^{2m} \otimes \R^n$. The symplectic form then reads $\Omega_m \otimes \id_n$. Now consider the matrix $S_n \coloneqq \id_{2m} \otimes O_n$ with $O_n$ orthogonal. Then $S_n$ is symplectic, therefore given a seed $\gamma$, also $S_n \gamma S_n^T$ is a valid measurement seed. Note also that $S_n V^{\oplus n} S_n^T = V^{\oplus n}$. Hence, re-parameterising with these seed \eqref{Stein_calculation_0} has the only effect of transforming the first moments as $(s-t)^{\oplus n} \to S_n^T (s-t)^{\oplus n}$. Then $(s-t)\otimes u \to (s-t)\otimes O_n^Tu$, where $u$ is the all-one vector. The whole optimisation problem remains unaffected, since it is on every possible seed. In particular, we can choose $O_n$ to satisfy $O_n^T u =\sqrt{n}(1,0,...,0)^T$. This implies that 
    \bb\label{eq:mapping}
    2 \Tr \left[ \left(W^{\oplus n} \!+\! \gamma \right)^{-1} \left( (s\!-\!t)^{\oplus n} \right) \left( (s\!-\!t)^{\oplus n} \right)^\intercal \right] 
    \longmapsto 2 \Tr \left[ \left(W^{\oplus n} \!+\! \gamma \right)^{-1} \left( n (s\!-\!t)(s\!-\!t)^\intercal \oplus 0^{\oplus {n-1}} \right) \right] ,
    \ee
    while the other terms in \eqref{Stein_calculation_0} remain unchanged.
    Let us now rewrite $\gamma$ in the block form 
    \bb
    \gamma = \begin{pmatrix}
    \gamma_A & \gamma_{AB} \\
    \gamma_{AB}^\intercal & \gamma_B
    \end{pmatrix},
    \label{eq:sep1}
    \ee
    where $\gamma_A$ acts on one copy, and $\gamma_B$ acts on the remaining $n-1$ copies. 
    To show that the right-hand side in Eq.~(\ref{eq:sfactor}) provides a lower bound for the left-hand side, we can choose an ansatz for $\gamma$ with $\gamma_{AB} = 0$. Plugging into \eqref{eq:sep1}, this allows $\gamma_A$ to optimise the term involving the first moments (which is additive according to proposition~\ref{thm:firstmoments}) and $\gamma_B$ to optimise the term involving the second moments. Sending $n\to\infty$, the single-copy second-moment term disappears, hence this measurement achieves Eq.~(\ref{eq:sfactor}).
    We can then proceed to establish the converse bound. After the $S_n$ mapping, the Stein exponent can be rewritten as
    \bb\label{eq:sfactor2}
    \lim_{n\to\infty}\frac{1}{n} \sup_{\gamma\geq i\Omega} \frac{1}{2} \biggl[ &\log \frac{\det \left(W^{\oplus n} \!+\! \gamma \right)}{\det \left(V^{\oplus n} \!+\! \gamma\right)} + \Tr (V-W)(W+\Tilde{\gamma}_A^W)^{-1} \\
    &+ \Tr (V-W)^{\oplus(n-1)}(W^{\oplus(n-1)}+\Tilde{\gamma}_B^W)^{-1} + 2n\Tr (s-t)(s-t)^\intercal (W+\Tilde{\gamma}_A^W)^{-1}  \biggl],
    \ee
    where 
    \bb
    \Tilde{\gamma}_A^W \coloneqq (W^{\oplus n} + \gamma)/(W^{\oplus(n-1)}+\gamma_B) - W,
    \ee
    and
    \bb
    \Tilde{\gamma}_B^W \coloneqq (W^{\oplus n} + \gamma)/(W+\gamma_A) - W^{\oplus(n-1)},
    \ee
    are two valid QCMs since $W^{\oplus n} + \gamma \geq W^{\oplus n} + i \Omega$, and the monotonicity of the Schur complement implies $\Tilde{\gamma}_A^W,\Tilde{\gamma}_B^W\geq i \Omega$ by taking the Schur complement on both sides of the inequality with respect to blocks $B$ and $A$ respectively. We now proceed by upper bounding the first term in Eq.~(\ref{eq:sfactor2}) by
    \bb\label{eq:sbound1}
    \log \frac{\det \left(W^{\oplus n} \!+\! \gamma \right)}{\det \left(V^{\oplus n} \!+\! \gamma\right)} &\leqt{(i)} \log \frac{\det \left(W^{\oplus n} \!+\! \gamma \right)}{\det \left({W\oplus V^{\oplus (n-1)}}\!+\! \gamma\right)} + \frac{\|W-V\|_1}{\lambda_{\min}(V)} \\
    &\eqt{(ii)} \log \frac{\det \left(W^{\oplus (n-1)} \!+\! \Tilde{\gamma}_B^W \right)}{\det \left(V^{\oplus (n-1)} \!+\! \Tilde{\gamma}_B^W\right)} + \frac{\|W-V\|_1}{\lambda_{\min}(V)},
    \ee
    where in (i) we used that for any pair of matrices $A,X$ with $A>0$, $X=X^\dagger$, and $A+X>0$,
    \bb
    \frac{\det(A+X)}{\det A} = \det\left( \id + A^{-1/2}XA^{-1/2} \right) = \exp \Tr \ln(\id + A^{-1/2}XA^{-1/2}) \leq \exp \Tr A^{-1}X \leq e^{\|X\|_1/\lambda_{\min}(A)},
    \ee
    and in particular we made the choice $A = V^{\oplus n}+\gamma$ and $X = (W-V){\oplus 0^{\oplus (n-1)}}$ and noted that $\lambda_{\min}(V^{\oplus n}+\gamma) \geq \lambda_{\min}(V^{\oplus n}) = \lambda_{\min}(V)$ 
    because $\gamma\geq 0$. In (ii) instead we used that $\det \left({W\oplus V^{\oplus (n-1)}} \!+\! \gamma\right) = \det \left(V^{\oplus (n-1)} \!+\! \Tilde{\gamma}_B^W\right) \det \left( W + \gamma_A \right) $. We can also simply bound the second term in Eq.~(\ref{eq:sfactor2}) as 
    \bb\label{eq:sbound2}
    \Tr (V-W)(W+\Tilde{\gamma}_A^W)^{-1} \leq \frac{\|W-V\|_1}{\lambda_{\min}\left( W \right)},
    \ee
    and the fourth as
    \bb
    2n\Tr (s-t)(s-t)^\intercal (W+\Tilde{\gamma}_A^W)^{-1} \leq 2n\Tr (s-t)(s-t)^\intercal W^{-1}.
    \ee
    Combining Eqs.~(\ref{eq:sbound1}) and (\ref{eq:sbound2}) we can upper bound the term in the square brackets in Eq.~(\ref{eq:sfactor2}) by
    \bb
    \log \frac{\det \left(W^{\oplus (n-1)} \!+\! \Tilde{\gamma}_B^W \right)}{\det \left(V^{\oplus (n-1)} \!+\! \Tilde{\gamma}_B^W\right)} + \|W-V\|_1 \left( \frac{1}{\lambda_{\min}(V)} + \frac{1}{\lambda_{\min}\left( W \right)}\right) + \Tr (V-W)^{\oplus(n-1)}(W^{\oplus(n-1)}+\Tilde{\gamma}_B^W)^{-1} \\
    +  2n\Tr (s-t)(s-t)^\intercal W^{-1}.
    \ee
    Taking the supremum, dividing by $n$, sending $n \to \infty$, and noting that the remainder terms in Eqs.~(\ref{eq:sbound1}) and (\ref{eq:sbound2}) do not depend on $\gamma$ or $n$ and thus vanish in the many-copy limit, we establish the original claim.
    \end{proof}

\section{The problem of additivity}
We have established in the previous section that the Gaussian Stein exponent splits in the asymptotic limit into a single-letter term for the first moments, and a separate optimisation of the remaining $n-1$ copies in the purely second moment contribution. The term $\mathrm{Stein}_\G(\rho^0_\G\|\sigma^0_\G)$ remains a regularised quantity, and is thus difficult to calculate. In this section we focus purely on the second moment optimisations in \eqref{Stein_calculation_0}, i.e. the contribution
\bb
\mathrm{Stein}_\G(\rho^0_\G\|\sigma^0_\G) = \lim_{n\to\infty} \frac1n \sup_{\gamma \geq i\Omega^{\oplus n}} \Bigg\{ \frac12 \log \frac{\det \left(W^{\oplus n} \!+\! \gamma \right)}{\det \left(V^{\oplus n} \!+\! \gamma\right)}+ \frac12 \Tr \left[ \left(W^{\oplus n} \!+\! \gamma \right)^{-1} (V\!-\!W)^{\oplus n} \right] \Bigg\}.
\label{eq:steincov}
\ee

To this end, we begin by proving a lemma relating log-determinants of sums of matrices to the overlaps of their eigenvectors.
{\begin{lemma}\label{lemma:logdetoverlap}
Let \(A,B>0\) be real symmetric matrices with eigendecomposition
\bb
A=\sum_i\alpha_i \ketbra{\psi_i} \qquad \text{and}\qquad
B=\sum_j\beta_j\ketbra{\phi_j}.
\ee
Where \(\{\psi_i\}\) and \(\{\phi_j\}\) are real orthonormal
eigenbases.
Then, the following inequality holds:
\bb\label{eq:op_ineq}
\log\det(A+B)
\ge
\sum_{i,j}
|\braket{\psi_i|\phi_j}|^2
\log(\alpha_i+\beta_j).
\ee
where, if \(A\) and \(B\) commute, equality holds. 
\end{lemma}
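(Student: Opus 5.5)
The plan is to use concavity of $\log$ through a Jensen-type inequality on the spectral measure of $A+B$. Write $C\coloneqq A+B>0$ with eigendecomposition $C=\sum_k \lambda_k\ketbra{\chi_k}$, so that
\[
\log\det C=\sum_k\log\lambda_k=\sum_k\bra{\chi_k}\log C\ket{\chi_k}.
\]
Rather than working in the eigenbasis of $C$, I would expand the trace of $\log C$ in the product basis indexed by pairs $(i,j)$. This is done by dilation: consider the vectors $\ket{\psi_i}\otimes\ket{\phi_j}$, or, more concretely, the rank-one weights $p_{ij}\coloneqq|\braket{\psi_i|\phi_j}|^2$. These form a doubly stochastic matrix, since $\sum_j p_{ij}=1$ and $\sum_i p_{ij}=1$.

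The key step is to write $\log\det C=\Tr\log C=\sum_{i,j}\bra{\psi_i}\Pi_j\log C\,\Pi_j\ket{\psi_i}$ with $\Pi_j=\ketbra{\phi_j}$. This is awkward, so I would instead use the operator-concavity route. Define the unit vectors $\ket{v_{ij}}\coloneqq\ket{\phi_j}\braket{\phi_j|\psi_i}/\sqrt{p_{ij}}$ (for $p_{ij}>0$), which is just $\ket{\phi_j}$ up to sign. This does not yet separate $\alpha_i$ and $\beta_j$.

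The cleaner approach is the Jensen inequality for the concave function $\log$ applied to the doubly indexed quantity
\[
\Tr\log C=\sum_j\bra{\phi_j}\log C\ket{\phi_j}\geq\dots
\]
Here Jensen gives $\bra{\phi}\log C\ket{\phi}\le\log\bra{\phi}C\ket{\phi}$, which is the wrong direction. I would therefore use a variational formula in the right direction. For $X>0$,
\[
\log\det X=\min_{Y>0}\big\{\Tr XY-\log\det Y-d\big\},
\]
which is not helpful for a lower bound either. The correct tool is the trace inequality $\Tr f(A+B)\ge\sum_{i,j}p_{ij}f(\alpha_i+\beta_j)$ for concave $f$ — wait, I would check the direction on $2\times2$ examples first. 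In the commuting case both sides agree. For noncommuting $A,B$ I would consider the map $\Phi$ on operators on $\mathcal H\otimes\mathcal H$ given by the pinching/compression $\Phi(Z)=V^\dagger ZV$, where $V\ket{x}=\sum_{i,j}\ket{\psi_i}\ket{\phi_j}\,\braket{\psi_i|\phi_j}\ldots$. I expect constructing this isometry properly to be the main obstacle.

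Concretely, I would take $V:\mathcal H\to\mathcal H\otimes\mathcal H$ with $V\ket{x}=\sum_{i,j}\braket{\psi_i|x}\braket{\phi_j|\psi_i}\ket{i}\ket{j}$; rather than verify isometry fully, I would instead choose the simpler route below.

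The simpler route is to set $K\coloneqq A\otimes\id+\id\otimes B$ on $\mathcal H\otimes\mathcal H$, which has eigenvalues $\alpha_i+\beta_j$ with eigenvectors $\ket{\psi_i}\ket{\phi_j}$. Let $\ket{\Phi^+}=\sum_k\ket{k}\ket{k}$ in a real basis. Then $\braket{\psi_i\phi_j|\Phi^+}=\braket{\psi_i|\phi_j}$, which holds because the vectors are real, so
\[
\sum_{i,j}p_{ij}\log(\alpha_i+\beta_j)=\bra{\Phi^+}\log K\ket{\Phi^+}.
\]
Moreover $C=A+B$ is the compression of $K$ onto the "diagonal". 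By the real-basis identity $(A\otimes\id)\ket{\Phi^+}=(\id\otimes A)\ket{\Phi^+}$, one gets $\bra{\Phi^+}(X\otimes\id)\ket{\Phi^+}=\Tr X$. The desired inequality $\Tr\log(A+B)\ge\bra{\Phi^+}\log K\ket{\Phi^+}$ should then follow from operator concavity of $\log$ combined with the Hansen--Pedersen--Jensen inequality for the unital-in-trace map $\Psi(Z)\coloneqq d^{-1}\sum$ of the twirl onto the commutant of $\{U\otimes\bar U\}$. Here the twirl maps $K$ to $\frac{\Tr(A+B)}{d}$-type averages, so I would need to adjust it.

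At this point I would fall back on the cleanest verifiable argument: the Peierls–Bogoliubov / Berezin inequality, which states that for concave $f$ and any orthonormal basis $\{e_k\}$, $\Tr f(C)\le\sum_k f(\bra{e_k}C\ket{e_k})$ — again an upper bound. I therefore expect the genuine difficulty is that the stated direction requires joint information on $A$ and $B$. The proof I would aim for uses the integral representation
\[
\log x=\int_0^\infty\Big(\frac1{1+t}-\frac1{x+t}\Big)dt,
\]
which reduces the claim to $\Tr(A+B+t)^{-1}\le\sum_{i,j}p_{ij}(\alpha_i+\beta_j+t)^{-1}$ for each $t\ge0$. Writing $(A+B+t)^{-1}$ and using convexity of $x\mapsto x^{-1}$ in the form $\bra{\psi_i}(A+B+t)^{-1}\ket{\psi_i}$ compared against $\bra{\Phi^+}(K+t)^{-1}\ket{\Phi^+}$ via operator convexity of the inverse under the compression $Z\mapsto \bra{\Phi^+}\cdot\ket{\Phi^+}$ restricted appropriately, the remaining task is to establish this inverse-trace inequality. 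I expect that step to be the main obstacle. In the commuting case every inequality is an equality, which gives the equality statement.
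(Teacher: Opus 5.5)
There is a genuine gap: you never prove the central inequality $\Tr\log(A+B)\ge\bra{\Phi^+}\log K\ket{\Phi^+}$, and you leave it, twice and in different guises, as the expected main obstacle. Your setup is the right one, and it is the paper's. Under vectorisation, $K=A\otimes\id+\id\otimes B$ with the unnormalised vector $\ket{\Phi^+}$ is exactly the paper's superoperator $M=L_A+R_B$ on Hilbert--Schmidt space, tested against the identity matrix. Your identity $\sum_{i,j}p_{ij}\log(\alpha_i+\beta_j)=\bra{\Phi^+}\log K\ket{\Phi^+}$ is correct.

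What is missing is the right symmetrisation. The full $U\otimes\bar U$ twirl you consider averages too much: it sends $K$ to $\frac{\Tr(A+B)}{d}\,\id\otimes\id$. Operator concavity then only yields $\bra{\Phi^+}\log K\ket{\Phi^+}\le d\log\frac{\Tr C}{d}$ with $C=A+B$, which is weaker than needed since $d\log\frac{\Tr C}{d}\ge\Tr\log C$. The correct group is just $\{\id,F\}$, with $F$ the swap; this is the paper's transpose map. Since $F\ket{\Phi^+}=\ket{\Phi^+}$ and $FKF=B\otimes\id+\id\otimes A$, midpoint operator concavity of $\log$ gives
\[
\log\tfrac12\big(C\otimes\id+\id\otimes C\big)=\log\frac{K+FKF}{2}\ \ge\ \frac12\big(\log K+F(\log K)F\big).
\]
Taking $\bra{\Phi^+}\cdot\ket{\Phi^+}$ turns the right-hand side into $\bra{\Phi^+}\log K\ket{\Phi^+}$. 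On the left, write $C=\sum_k c_k\ketbra{\chi_k}$ in a real eigenbasis. Then $\braket{\chi_k\chi_l|\Phi^+}=\delta_{kl}$, so the expectation is $\sum_k\log\frac{c_k+c_k}{2}=\log\det C$. That is the entire proof.

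Your final fallback via $\log x=\int_0^\infty\big(\frac1{1+t}-\frac1{x+t}\big)dt$ is a valid reduction, using $\sum_{i,j}p_{ij}=d$. However, it only replaces the claim by the same statement for $x\mapsto x^{-1}$ with $A$ shifted by $t$, so it is no easier. The tool you propose for it, Jensen's inequality under the compression $Z\mapsto\bra{\Phi^+}Z\ket{\Phi^+}$ alone, cannot close it. It gives $\bra{\Phi^+}(K+t)^{-1}\ket{\Phi^+}\ge d^2/(\Tr C+td)$, but $\Tr(C+t)^{-1}$ satisfies the same lower bound, so no comparison between the two sides follows. The inverse-trace inequality is true, but again only after swap averaging, via $\big(\frac{K+FKF}{2}+t\big)^{-1}\le\frac12\big((K+t)^{-1}+F(K+t)^{-1}F\big)$; so the integral detour adds nothing.

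The commuting-case equality is immediate: take a common eigenbasis, noting the right-hand side depends only on spectral projections, so $p_{ij}$ is a permutation matrix. But that observation needs the inequality chain to exist first. Your other attempts (Jensen on a spectral measure, Berezin/Peierls--Bogoliubov) go in the wrong direction, as you noticed yourself.
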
}

\begin{proof}
{
Consider Hilbert--Schmidt matrix space and define
\bb
L_A(X)=AX,\qquad
R_B(X)=XB.
\ee
Put $M=L_A+R_B$ and denote by $\pazocal{T}$ the transpose map, namely \(\pazocal T(X)=X^T\). Then,
$\pazocal T M\pazocal T=L_B+R_A$.
By the operator concavity of the logarithm, we have
\bb
\log\frac{M+\pazocal T M\pazocal T}{2}
\ge
\frac12
\left(
\log M+\pazocal T(\log M)\pazocal T
\right).
\ee
Take the expectation against \(\id\). Since \(\pazocal T(\id)=\id\), the right-hand side equals
$\langle \id,\log M\,\id\rangle$.
Now,
\bb
\frac{M+\pazocal T M\pazocal T}{2}=\frac12(L_{A+B}+R_{A+B}).
\ee

If \(c_k\) are the eigenvalues of \(A+B\), then the expectation against \(\id\) of the logarithm of this operator is
\bb
\sum_k\log c_k
=
\log\det(A+B).
\ee
On the other hand, \(\psi_i\phi_j^T\)
is an eigenvector of \(M\) with eigenvalue \(\alpha_i+\beta_j\),
and
\[
\langle \psi_i\phi_j^T,\id\rangle
=\langle \psi_i,\phi_j\rangle.
\]
Therefore
\bb
\langle \id,\log M\,\id\rangle
=
\sum_{i,j}
|\langle \psi_i,\phi_j\rangle|^2
\log(\alpha_i+\beta_j),
\ee
proving the claim.}
\end{proof}

We are now in a position to prove the main result of this section. We show here sufficient conditions under which the covariance optimisation \eqref{eq:steincov} becomes additive, and we can therefore remove the regularisation.
\begin{thm}[(Sufficient conditions for additivity)]\label{thm:sufficient} Let \(V,W\) be \(m\)-mode quantum covariance matrices. Assume that \(W\) has flat symplectic spectrum, namely, there is a symplectic matrix \(S\) such that $SWS^\intercal = w\id_{2m}$.
If, in addition, 
\bb\label{eq:condition}
[\widetilde V,\Omega\widetilde V\Omega^\intercal]=0,
\ee
where  $\widetilde V\coloneqq SVS^\intercal$, then we have
\bb
    \rel{D^{\G}}
{V^{\oplus n}}{W^{\oplus n}}
=
n\rel{D^{\G}}{V}{W},
\ee
which immediately implies $\rel{D^{\G,\infty}}{V}{W}=\rel{D^{\G}}{V}{W}$. 

\end{thm}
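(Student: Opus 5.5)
The inequality $\rel{D^{\G}}{V^{\oplus n}}{W^{\oplus n}}\ge n\rel{D^{\G}}{V}{W}$ is immediate: restrict to product seeds $\gamma^{\oplus n}$ and use additivity of the classical relative entropy, exactly as in the super-additivity step of Proposition~\ref{prop:regstein}. The work is in the converse. The plan is to show that, under the hypotheses, the Gaussian-measured relative entropy of a block-diagonal pair decomposes exactly into single-mode contributions, and to use the same decomposition for $V$ and for $V^{\oplus n}$.

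\emph{Step 1 (normal form).} The seed set $\{\gamma\ge i\Omega\}$ is invariant under congruence by symplectic matrices, and so is the objective in \eqref{eq:steincov}. I would therefore conjugate by $S^{\oplus n}$ and replace $(V,W)$ by $(\widetilde V,w\id_{2m})$. Next I would use the commutation condition \eqref{eq:condition} to obtain a passive transformation $O$ (orthogonal and symplectic) with
\[
O\widetilde V O^\intercal=\bigoplus_{k=1}^m \operatorname{diag}(a_k,b_k)
\]
in the $(Q_k,P_k)$ ordering. The reasoning is as follows. $\Omega\widetilde V\Omega^\intercal$ is $\widetilde V$ with the roles of the $Q$'s and $P$'s exchanged. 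Commutation lets us diagonalise both matrices simultaneously. Since $\Omega$ maps the eigenspaces of $\widetilde V$ to those of $\Omega\widetilde V\Omega^\intercal$, one can choose an eigenbasis of the form $\{e_k,\Omega e_k\}$, which is a symplectic orthonormal basis. Passive maps leave $w\id$ invariant, so the reduced pair is block-diagonal over single modes. The same normal form holds for $V^{\oplus n}$, now with $mn$ modes. Hence it suffices to prove, for $D\coloneqq\bigoplus_k \operatorname{diag}(a_k,b_k)$,
\[
\rel{D^\G}{D}{w\id}=\sum_k \rel{D^\G}{\operatorname{diag}(a_k,b_k)}{w\id_2}.
\]
The inequality $\ge$ again follows from product seeds.

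\emph{Step 2 (upper bound via Lemma~\ref{lemma:logdetoverlap}).} By the monotonicity remark after Proposition~\ref{prop:regstein}, we may take $\gamma$ pure. Its eigenvalues then come in pairs $(\lambda_i,1/\lambda_i)$ with eigenvectors $(f_i,\Omega f_i)$. Because $W=w\id$, the terms $\log\det(w\id+\gamma)$ and $\Tr[(w\id+\gamma)^{-1}(D-w\id)]$ are exactly linear in the overlaps $p_{ij}=|\langle f,\,\text{basis}\rangle|^2$ between the eigenvectors of $\gamma$ and the coordinate eigenvectors of $D$. For the remaining term, $-\log\det(D+\gamma)$, Lemma~\ref{lemma:logdetoverlap} gives the upper bound $-\sum_{i,j}p_{ij}\log(\lambda_i+d_j)$. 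Altogether,
\[
f(\gamma)\le \sum_{i,j}p_{ij}\,h(\lambda_i,d_j),\qquad h(\lambda,d)=\tfrac12\Big[\log\tfrac{w+\lambda}{d+\lambda}+\tfrac{d-w}{w+\lambda}\Big],
\]
where $p$ is doubly stochastic.

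\emph{Step 3 (regrouping into single modes; the main obstacle).} We need to reorganise the double sum into a convex combination of single-mode objective values $f_k(\operatorname{diag}(\lambda,1/\lambda))$ for the pairs $(a_k,b_k)$. These values are bounded by $\rel{D^\G}{\operatorname{diag}(a_k,b_k)}{w\id_2}$, and summing over $k$ would give the claim. The structural input is that $\Omega$ pairs both bases: it sends $f_i\mapsto\Omega f_i$ and $Q_k\mapsto P_k$. Consequently
\[
|\langle f_i,Q_k\rangle|^2=|\langle\Omega f_i,P_k\rangle|^2,\qquad |\langle f_i,P_k\rangle|^2=|\langle\Omega f_i,Q_k\rangle|^2 .
\]
So for each pair $(i,k)$ the weights combine into terms of the form
\[
q\,[h(\lambda,a_k)+h(1/\lambda,b_k)]+q'\,[h(\lambda,b_k)+h(1/\lambda,a_k)],
\]
which are exactly single-mode objectives for seeds $\operatorname{diag}(\lambda,1/\lambda)$ and $\operatorname{diag}(1/\lambda,\lambda)$ aligned with mode $k$. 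The total weight per mode $k$ must sum to one, and this follows from double stochasticity. I expect the delicate points to be two. First, handling the $Q$–$P$ pairing correctly, including possibly complex eigenvectors of pure seeds and degenerate eigenvalues. Second, confirming that no cross terms mixing different modes $k$ survive. If the clean pairing fails, I would fall back on averaging over the phase-rotation twirl $\gamma\mapsto\bigoplus_k R_{\theta_k}\cdots$, which fixes $w\id$. That averaging would at least restore block structure before applying the lemma.
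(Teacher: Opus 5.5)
Your proposal is correct and follows essentially the same route as the paper. You reduce to the pair $\bigl(\bigoplus_k\diag(a_k,b_k),\,w\id\bigr)$ by a symplectic plus passive transformation, restrict to pure seeds $K\,\diag(X,X^{-1})K^\intercal$, and bound the only non-linear term with Lemma~\ref{lemma:logdetoverlap}. You then regroup the doubly-stochastic weights (the paper's $C_{r\ell}^2,S_{r\ell}^2$) through the $\Omega$-pairing into single-mode objectives whose total weight per mode is one. Your joint-eigenspace argument for the normal form is a slightly more direct alternative to the paper's $\widetilde V_\pm$ splitting.

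The obstacles you flag in Step 3 do not arise, so no twirling fallback is needed:
\begin{itemize}
\item A real symmetric pure seed always has a real orthonormal eigenbasis of the paired form $\{f_i,\Omega f_i\}$, even with degeneracies; this is just Bloch--Messiah.
\item Each term $p_{ij}h(\lambda_i,d_j)$ carries a single mode index, so no cross terms between modes can appear.
\item The per-mode weight $\sum_i\bigl(|\langle f_i,Q_k\rangle|^2+|\langle\Omega f_i,Q_k\rangle|^2\bigr)$ equals $1$ by completeness of the basis.
\end{itemize}
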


\begin{proof}[Proof of Theorem \ref{thm:sufficient}]
    The proof is divided in four steps.\smallskip
    
    \textbf{Step 1.} First, we show that \eqref{eq:condition} does not depend on the specific choice of the symplectic matrix diagonalising $W$, but only on the pair $(V,W)$. Indeed, suppose 
    \bb
    S_1WS_1^\intercal=S_2WS_2^\intercal=w\id_{2m},
    \ee
    and set $O\coloneqq S_2S_1^{-1}$. Then, $O$ is orthogonal and symplectic, namely $O\in {\rm Sp}(2m,\R)\cap {\rm O}(2m)$. Moreover, calling $\widetilde V_i\coloneqq S_iVS_i^\intercal$ for $i=1,2$, we immediately have $\widetilde V_2=O\widetilde V_1O^\intercal$. Since any orthogonal symplectic matrix commutes with $\Omega$, we get
\bb
\Omega\widetilde V_2\Omega^\intercal
= O(\Omega\widetilde V_1\Omega^\intercal)O^\intercal,
\ee
whence
\bb
[\widetilde V_2,\Omega\widetilde V_2\Omega^\intercal]
=O[\widetilde V_1,\Omega\widetilde V_1\Omega^\intercal]O^\intercal,
\ee
which shows our claim.\smallskip

\textbf{Step 2.} Now we want to prove that, for any arbitrary real and symmetric matrix $\widetilde V$, we have
\bb\label{eq:claim1}
[\widetilde V,\Omega \widetilde V\Omega^\intercal]=0\qquad \iff \qquad \exists O\in {\rm O}(2m)\cap{\rm Sp}(2m,\R) : O\widetilde VO^\intercal
=
\diag
(a_1,\ldots,a_m,b_1,\ldots,b_m).
\ee
Consider the decomposition
\bb\label{eq:decomposition}
\widetilde V=\widetilde V_++\widetilde V_-\qquad\text{with}\qquad \widetilde V_+\coloneqq \frac12(\widetilde V+\Omega \widetilde V\Omega^\intercal),
\qquad
\widetilde V_-\coloneqq \frac12(\widetilde V-\Omega \widetilde V\Omega^\intercal).
\ee
Note that $[\widetilde V_+,\Omega]=0$ and $\{\widetilde V_-,\Omega\}=0$. The above decomposition clearly satisfies $\Omega \widetilde V\Omega^\intercal=\widetilde V_+-\widetilde V_-$, therefore
\bb
[\widetilde V,\Omega \widetilde V\Omega^\intercal]
=
[\widetilde V_++\widetilde V_-,\widetilde V_+-\widetilde V_-]
=
-2[\widetilde V_+,\widetilde V_-].
\ee
The left-hand side of \eqref{eq:claim1} is thus equivalent to
\bb\label{eq:assumption_commute}
[\widetilde V_+,\widetilde V_-]=0.
\ee
Consider the diagonalisation \(\widetilde V_+=\sum_i\lambda_i\Pi_i\) of the real symmetric matrix. Since \(\widetilde V_+\) commutes with \(\Omega\), every eigenspace of \(\widetilde V_+\) is invariant under \(\Omega\).
Furthermore, under the assumption \eqref{eq:assumption_commute} that \(\widetilde V_-\) commutes with \(\widetilde V_+\), each eigenspace is also invariant under \(\widetilde V_-\).
Hence, by working separately on every subspace $\pazocal{V}_i=\ran \Pi_i$, we can proceed without loss of generality under the assumption that $\widetilde V_+=c\id$ for some $c\in \mathbb{R}$. Since \(\widetilde V_-\) is real symmetric and anticommutes with \(\Omega\), it is known that the eigenvalues of \(\widetilde V_-\) occur in pairs \(s,-s\), with eigenvectors related by \(\Omega\)\footnote{Indeed, suppose $\widetilde V_-u=su$. Then, $\widetilde V_-(\Omega u)=-\Omega \widetilde V_-u=-s\,\Omega u$.}.
Within \(\pazocal V_i\), choose an orthonormal eigenbasis
\(\{q_j\}\) of the strictly positive eigenspaces of
\(\widetilde V_-\), and define \(p_j\coloneqq-\Omega q_j\).
These partners form an orthonormal basis of the strictly
negative eigenspaces.
On the kernel, construct the pairs iteratively: choose a
unit vector \(q_j\), set \(p_j=-\Omega q_j\), and repeat in
the orthogonal complement of the pairs already chosen,
assigning \(s_j=0\) to these pairs.
Since \(\Omega\) is orthogonal and skew-symmetric, each pair
is orthonormal; moreover, the kernel and each remaining
orthogonal complement are invariant under \(\Omega\),
so the construction exhausts the kernel.
Then,
\bb
\Omega q_j=-p_j,\qquad
\Omega p_j=q_j,\qquad
 \widetilde V_-q_j=s_jq_j,
\qquad
\widetilde V_-p_j=-s_jp_j.
\ee
The vectors \(\{q_j,p_j\}\) form an orthonormal symplectic basis. In that basis,
\bb
\widetilde V_+ = c\id \qquad \widetilde V_-=
\diag
(s_1,\ldots,s_d,-s_1,\ldots,-s_d),
\ee
therefore $\widetilde V$ has the form of the right-hand side of \eqref{eq:claim1}:
\bb
\widetilde V=\widetilde V_++\widetilde V_-=\diag
(c+s_1,\ldots,c+s_d,
c-s_1,\ldots,c-s_d)
\ee
Doing this independently in every eigenspace of \(\widetilde V_+\) produces an orthogonal symplectic \(O\) completing the direct implication of \eqref{eq:claim1}. Conversely, if \(\widetilde V\) already has the form as in the right-hand side of \eqref{eq:claim1}, then
\[
\Omega \widetilde V\Omega^\intercal
=
\diag
(b_1,\ldots,b_m,a_1,\ldots,a_m),
\]
which clearly commutes with \(\widetilde V\). Orthogonal-symplectic conjugation preserves this commutator.\smallskip

\textbf{Step 3.} Now, under \eqref{eq:condition}, let us apply the direct implication of the previous step: there exists $O\in {\rm O}(2m)\cap{\rm Sp}(2m,\R)$ such that
\bb
O\widetilde VO^\intercal
=
\begin{pmatrix}
A&0\\
0&B
\end{pmatrix}\qquad\text{with}\qquad A=\diag(a_1,\ldots,a_m),
\qquad
B=\diag(b_1,\ldots,b_m).
\ee
Since \(O\) is orthogonal, the common symplectic transformation $OS$
puts the pair $(V,W)$ into the pair
\bb\label{eq:form}
V'=(OS)V(OS)^\intercal=
\diag
(a_1,\ldots,a_m,b_1,\ldots,b_m),
\qquad W'=w\id_{2m}.
\ee
The Gaussian measured relative entropy is invariant under a common symplectic transformation: any Gaussian measurement seed is transformed bijectively to another Gaussian measurement seed, and the corresponding classical outcomes differ only by an invertible linear coordinate change. Hence, without loss of generality, it is sufficient to prove Theorem \ref{thm:sufficient} in the particular case of covariance matrices of the form \eqref{eq:form}.\smallskip

When considering \(n\) copies of the bosonic system, by the previous step we can put $V^{\oplus n}$ and $W^{\oplus n}$ in the form $W=w\id_{2nm}$ and
\bb\label{eq:ntimes}
    V_n&=\diag(a_1,\ldots,a_m,a_{m+1}, \ldots,a_N, b_1,\ldots,b_m,b_{m+1}\ldots, b_N)\\
    & \coloneqq
\diag
(\underbrace{a_1,\ldots,a_m,\ldots, a_1,\ldots,a_m}_{n \text{ times}}, \underbrace{b_1,\ldots,b_m,\ldots, b_1,\ldots,b_m}_{n \text{ times}})
\ee
where, for convenience, we have identified $a_r$ with $a_{1+(r-1\!\!\mod\! m)}$.
Recall that, from \eqref{eq:steincov}, for a given Gaussian seed \(\Gamma\), the centred classical Gaussian relative entropy $F_n(\Gamma)$ is
\bb\label{eq:rel_ent}
F_n(\Gamma)=\frac12
\left[
\Tr
(W_n+\Gamma)^{-1}(V_n+\Gamma)-2N
-\log\det(V_n+\Gamma)
+\log\det(W_n+\Gamma)
\right], \qquad N=nm.
\ee
Taking the supremum over physical Gaussian seeds, we get $D^{\mathrm G}(V_n\|W_n)$. We want to show that in such optimisation pure seeds are sufficient. Given any arbitrary seed $\Gamma$, with Williamson decomposition $\Gamma=TDT^\intercal$.
The covariance matrix $\Gamma_0\coloneqq TT^\intercal$ is pure, satisfies
\bb
\Gamma-\Gamma_0\ge0,
\ee
and yields a larger relative entropy, namely,
\bb\label{eq:dpi}
    F_n(\Gamma)\le F_n(\Gamma_0).
\ee
Indeed, the output covariance associated with \(\Gamma\) is obtained from that associated with \(\Gamma_0\) by adding the same independent Gaussian classical noise under both hypotheses.
Therefore classical data processing gives \eqref{eq:dpi}.
Thus we may restrict the optimisation to pure seeds.

Every pure \(N\)-mode QCM can be written in the  Bloch-Messiah form \cite{Arvind_1995,PhysRevA.71.055801}
\bb\label{eq:form1}
\Gamma=K
\begin{pmatrix}
X&0\\
0&X^{-1}
\end{pmatrix}K^\intercal, \qquad \text{where}\qquad X=\diag(x_1,\ldots,x_N),
\qquad x_\ell>0,\quad \text{and}\quad K\in {\rm Sp}(2N,\R)\cap{\rm O}(2N),
\ee
where $N=nm$.
Furthermore, in grouped coordinates $(Q_1,\ldots,Q_N,P_1,\ldots,P_N)$, every orthogonal-symplectic matrix $K$ has the form
\bb\label{eq:K}
K=
\begin{pmatrix}
C&-S\\
S&C
\end{pmatrix},
\qquad
C+iS\in U(N).
\ee
In particular, for every $1\leq \beta\leq N$, from orthogonality of $K$ we have
\bb\label{eq:1}
\sum_{\alpha=1}^{2N} (K_{\alpha\beta})^2=\sum_{\alpha=1}^N
((C_{\alpha\beta})^2+(S_{\alpha\beta})^2)=1.
\ee
Let
$\mathfrak x=
(x_1,\ldots,x_N,x_1^{-1},\ldots,x_N^{-1})$ and 
$v=
(a_1,\ldots,a_N,b_1,\ldots,b_N)$
be the diagonal entries $\text{diag}(X,X^{-1})$ and
 \(V_n\), respectively.
The operator inequality in lemma \ref{lemma:logdetoverlap}, applied to \(V_n\) and \(\Gamma\), yields
\bb\label{eq:18}
\log\det(V_n+\Gamma)
\ge
\sum_{\alpha,\beta}
K_{\alpha\beta}^2
\log(v_\alpha+\mathfrak x_\beta).
\ee
while, for \(W_n=w\id\), we can write
\bb\label{eq:19}
\log\det(w\id+\Gamma)
=
\sum_{\alpha,\beta}
K_{\alpha\beta}^2
\log(w+\mathfrak x_\beta).
\ee
using the fact that $\sum_\alpha K_{\alpha\beta}^2=1$.
Similarly, we can write
\bb\label{eq:20}
\Tr
[(w\id+\Gamma)^{-1}(V_n+\Gamma)]
=
\sum_{\alpha,\beta}
K_{\alpha\beta}^2
\frac{v_\alpha+\mathfrak x_\beta}{w+\mathfrak x_\beta}.
\ee
Using the inequalities and identities \eqref{eq:18}, \eqref{eq:19} and \eqref{eq:20} in \eqref{eq:rel_ent}, we can convert arbitrary collective passive mode mixing into classical doubly-stochastic weights:
\bb\label{eq:upper}
F_n(\Gamma)
\le
\frac12
\sum_{\alpha,\beta}
K_{\alpha\beta}^2
f\!\left(
\frac{v_\alpha+\mathfrak x_\beta}{w+\mathfrak x_\beta}
\right)=\frac 12\sum_{r,\ell=1}^N \left\{C_{r\ell}^2 \left[f\!\left(
\frac{a_r+x_\ell}{w+x_\ell}\right)+f\!\left(
\frac{b_rx_\ell+1}{wx_\ell+1}\right)\right]+S_{r\ell}^2 \left[f\!\left(
\frac{b_r+x_\ell}{w+x_\ell}\right)+f\!\left(
\frac{a_rx_\ell+1}{wx_\ell+1}\right)\right]\right\}
\ee
where $f(y)=y-1-\log y$, and the second identity is obtained using the block structure \eqref{eq:K} and making explicit the coordinates of the vectors $\mathfrak x$ and $v$. Now, calling  
\bb
    \mathfrak F_r \coloneqq \sup_{x>0} \frac 12\left[f\!\left(
\frac{a_r+x}{w+x}\right)+f\!\left(
\frac{b_rx+1}{wx+1}\right)\right]
\ee
and recalling \eqref{eq:1}, the upper bound \eqref{eq:upper} can be relaxed to
\bb
    F_n(\Gamma) \le \sum_{r=1}^N \mathfrak F_r\sum_{\ell=1}^N(C_{r\ell}^2+S_{r\ell}^2)
    &=\sum_{r=1}^N \mathfrak F_r=n\sum_{j=1}^m \mathfrak F_j.
\ee
where in the last identity we have used that, according to \eqref{eq:ntimes}, each physical mode \(j\) occurs \(n\) times.
Since we have a global upper bound on the family of collective pure seeds, which are sufficient in the optimisation for $D^{\mathrm G}(V_n\|W_n)$, we get the upper bound
\bb
D^{\mathrm G}(V_n\|W_n)\le n\sum_{j=1}^m \mathfrak F_j.
\ee

\textbf{Step 4.} Finally, we exhibit that the previous upper bound can be attained by product seeds. 
For each physical mode \(j\), choose \(x_j>0\) in order to construct the one-copy product seed $\Gamma_*$ in grouped coordinated $(Q_1,P_1,\ldots,Q_m,P_m)$ as follows
\bb
 \Gamma_j\coloneqq
 \begin{pmatrix}
x_j&0\\
 0&x_j^{-1}
\end{pmatrix}, 
\qquad \Gamma_*\coloneqq
\bigoplus_{j=1}^m\Gamma_j.
\ee
On \(n\) copies we then consider $\Gamma_*^{\oplus n}$.
Since all the quantities are diagonal and commute, \eqref{eq:op_ineq} is saturated. In particular, optimising over $x_j>0$, we get
\bb
D^{\mathrm G}(V_n\|W_n)
\geq \sup_{\substack{x_j>0\\ 1\leq j\leq m}}F_n(\Gamma_*^{\oplus n})=n\sum_{j=1}^m \mathfrak F_j.
\ee
{In the case where the optimal $F_n(\Gamma_*^{\oplus n})$ is attained in the homodyne-like limit, corresponding to one of the limits $x_j\to0$ or $x_j\to\infty$ on some number of modes, we take the appropriate finite-squeezing limit on those modes.}
Since the upper and the lower bounds coincide, we conclude that
\bb
D^{\mathrm G}(V_n\|W_n) =n\sum_{j=1}^m
\mathfrak F_j=nD^{\mathrm G}(V\|W)
\ee
which completes the proof of the main claim.\smallskip

\end{proof}

Using the results of Theorem~\ref{thm:sufficient}, dividing by $n$ and taking the limit $n\to\infty$ in \eqref{eq:steincov}, we arrive at
\bb
\mathrm{Stein}_\G(\rho^0_\G\|\sigma^0_\G)= \sup_{\gamma \geq i\Omega} \Bigg\{ \frac12 \log \frac{\det \left(W \!+\! \gamma \right)}{\det \left(V \!+\! \gamma\right)}+ \frac12 \Tr \left[ \left(W \!+\! \gamma \right)^{-1} (V\!-\!W) \right] \Bigg\}=D^G(V\|W).
\ee
Combining this with the splitting of first and second moments in Theorem~\ref{thm:momentsep}, we arrive at
\bb
\mathrm{Stein}_\G(\rho_\G\|\sigma_\G)=D^G(V\|W)+(s-t)^\top W^{-1}(s-t).
\label{eq:gsteinsimple}
\ee

Theorem~\ref{thm:sufficient} provides sufficient conditions for the above formulation of the Gaussian Stein exponent \eqref{eq:gsteinsimple}. These conditions apply, for example, to all single-mode systems and all pure multimode systems. In the single-mode case, \(W\) automatically has a single symplectic eigenvalue $w=\sqrt{\det W}$, and for every positive symmetric \(2\times2\) matrix \(A\), we have $\Omega A\Omega^\intercal=(\det A)A^{-1}$, whence $[A,\Omega A\Omega^\intercal]=0$, thus every one-mode pair satisfies the Theorem. For a pure multimode covariance matrix $W$, by definition we have \(w=1\). By a symplectic transformation, we can map \(W\to \id\), while leaving \(\widetilde V\) pure, so that $\widetilde V\Omega\widetilde V=\Omega$.
Hence
\bb
\Omega\widetilde V\Omega^T
=
\widetilde V^{-1},\qquad \text{i.e.}\qquad [\widetilde V,\Omega\widetilde V\Omega^T]=[\widetilde V,\widetilde V^{-1}]
=0.
\ee
Thus every pure multimode pair also satisfies the condition stated in Theorem~\ref{thm:sufficient}.

\subsection{Single-mode example}\label{sec:singlemodeapp}
In this section we consider the Gaussian discrimination problem between two single-mode Gaussian states. According to Theorem~\ref{thm:momentsep}, we can consider separate optimisations over the terms involving first and second moments. The first moments term is already solved by a suitably rotated homodyne measurement, according to proposition~\ref{thm:firstmoments}. The term involving only second moments reduces to a single-copy optimisation, according to Theorem~\ref{thm:sufficient}. We are thus left to solve this single-copy second moment optimisation.

By exploiting symplectic invariance, we can parameterise an arbitrary pair of centred, single-mode Gaussian states via
\bb
V =  a  \begin{pmatrix} \mu & 0 \\ 0 & 1/\mu \end{pmatrix} \quad\text{and}\quad W =  b  \id_2,
\ee
where $a,b\geq1$, $\mu>0$,
and an arbitrary measurement seed via
\bb
\gamma=R_\theta \begin{pmatrix}
    z&0\\0&\frac1z
\end{pmatrix}
R_\theta^\top
\quad \text{with}\quad
R_\theta=\begin{pmatrix}
    \cos\theta &-\sin\theta\\
    \sin\theta&\cos\theta 
\end{pmatrix}.
\ee
Depending on the interplay between the state parameters $\mu,  a, b$, the Gaussian Stein exponent is achieved by a single-mode seed with no rotation ($\theta=0$), amounting either to a homodyne detection on the $x$- or $p$-axis ($z\rightarrow 0$ or $z \rightarrow \infty$), or a general-dyne measurement corresponding to projection onto a finitely squeezed Gaussian state with $z=z_{\text{opt}}$. The resulting Gaussian Stein exponent is then
\bb
\begin{aligned}
\mathrm{Stein}_\G(\rho_G\|\sigma_G) = \frac12 \max\bigg\{&\frac{\mu a }{ b } + \log\left(\frac{ b }{\mu a }\right) - 1, \frac{ a }{ b \mu} + \log\left(\frac{ b \mu}{ a }\right)  - 1,  \\
&
\frac{a \mu -b}{b+z_{\text{opt}}}+\frac{z_{\text{opt}} (a-b \mu )}{b \mu  z_{\text{opt}}+\mu }+\log \left(\frac{\mu  \left(b+z_{\text{opt}}\right) \left(b z_{\text{opt}}+1\right)}{\left(a \mu +z_{\text{opt}}\right) \left(a z_{\text{opt}}+\mu \right)}\right)\bigg\},
\end{aligned}
\label{eq:steinsinglemodesolution}
\ee
with $z_{\text{opt}}$ chosen as one of the (at most four) roots of 
\[
P(z)=\frac{ a  - b   \mu }{\mu  ( b   z+1)^2}+\frac{ b  -\mu   a  }{( b  +z)^2}+\frac{1}{ b  +z}-\frac{1}{z(1+bz)}+\frac{\mu }{z (\mu + a   z)}-\frac{1}{\mu   a  +z},
\]
such that the third term in \eqref{eq:steinsinglemodesolution} is maximised.
The first two terms in the maximisation correspond to homodyne measurements of the x- or p-axis. The third term corresponds to a general-dyne measurement, which for $z_{\text{opt}}=1$ constitutes heterodyne detection. There exist examples when this general-dyne measurement is the optimal choice, for example $\mu=6,  a =3, b =5$; this gives $z_{\text{opt}} \approx 1.9$.


\begin{figure} 
\includegraphics[width=12cm]{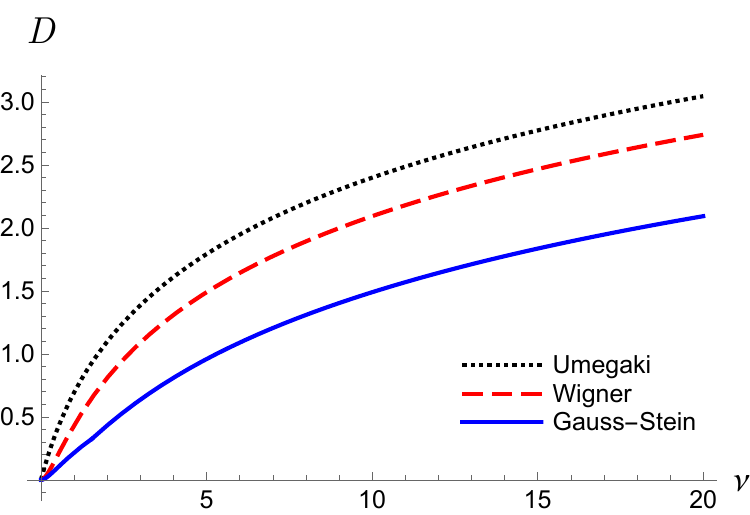}
\caption{Discriminating between a single-mode vacuum state and a single-mode thermal state with mean photon number $\nu$. The black dotted line is the Umegaki relative entropy corresponding to the unrestricted Stein exponent. The dashed red curve is the classical Wigner relative entropy upper bound. The blue line is the Gaussian Stein exponent. \label{steinthermal_fig}}
\end{figure}

For ease of illustration, let $V$ and $W$ be respectively the vacuum state ($a=1$, $\mu=1$) and a thermal state ($b=2\nu+1$) with mean photon number $\nu \geq 0$. In this case, the upper bound for the Stein exponent determined by the classical relative entropy between Wigner distributions is
\bb
D(V/2,0 \| W/2,0) = \log(2\nu+1)+(2\nu+1)^{-1}-1\,,
\ee
while the Umegaki relative entropy \cite{Umegaki1962} equals $\log(\nu+1)$. The Gaussian Stein exponent from Eq.~(\ref{eq:steinsinglemodesolution}) is given by
\bb 
\mathrm{Stein}_\G(\rho_G\|\sigma_G) = \max \left\{\frac12\left( \log(2\nu+1)+(2\nu+1)^{-1}-1 \right), \log(\nu+1)+(\nu+1)^{-1}-1\right\}\,.
\ee
The Umegaki relative entropy, classical relative entropy between Wigner distributions, and Gaussian Stein exponent for these states are plotted in Figure~\ref{steinthermal_fig}.

\end{document}